\newtheorem{corollary}{\textbf{Corollary}}
\newtheorem{theorem}{\textbf{Theorem}}
\newtheorem{proposition}{\textbf{Proposition}}
\newtheorem{remark}{\textbf{Remark}}
\newcommand{\nn}{\nonumber}
\newcommand{\cX}{\mathcal{X}}
\newcommand{\cW}{\mathcal{W}}
\newcommand{\tl}{\tilde{l}}
\newcommand{\tR}{\tilde{R}}
\newcommand{\hl}{\hat{l}}
\newcommand{\hw}{\hat{w}}
\DeclareMathAlphabet{\matheuf}{U}{euf}{m}{n}
\begin{document}
%

  \begin{center}
  	\baselineskip 1.3ex {\Large \bf State-Dependent Interference Channel with Correlated States
  		\footnote{The work of Y. Sun and Y. Liang was supported by the National Science Foundation under Grant CCF-1618127. The work of S. Shamai was supported by the European Union's Horizon 2020 Research And Innovation Programme, grant agreement No. 694630.}
  		\\
  	}
  	\vspace{0.15in} Yunhao Sun, \footnote{Yunhao Sun is with the Department of Electrical Engineering and Computer Science, Syracuse University, Syracuse, NY 13244 USA (email: ysun33@syr.edu).} Ruchen Duan,
  	\footnote{Ruchen Duan is with Samsung Semiconductor Inc., San Diego, CA 92121 USA (email: r.duan@samsung.com).} Yingbin Liang, \footnote{Yingbin Liang is with the Department of Electrical and Computer Engineering, The Ohio State University, Columbus, OH 43210 USA (email: liang.889@osu.edu).}
  	Shlomo Shamai (Shitz)\footnote{Shlomo Shamai (Shitz) is with the Department of Electrical Engineering, Technion-Israel Institute of Technology, Technion city, Haifa 32000, Israel (email: sshlomo@ee.technion.ac.il).}
  	
  \end{center}

%



\begin{abstract}
	This paper investigates the Gaussian state-dependent interference channel (IC) and Z-IC, in which two receivers are corrupted respectively by two different but correlated states that are noncausally known to two transmitters but are unknown to the receivers. Three interference regimes are studied, and the capacity region boundary or the sum capacity boundary is characterized either fully or partially under various channel parameters. In particular, the impact of the correlation between states on cancellation of state and interference as well as achievability of capacity is explored with numerical illustrations. For the very strong interference regime, the capacity region is achieved by the scheme where the two transmitters implement a cooperative dirty paper coding. For the strong but not very strong interference regime, the sum-rate capacity is characterized by rate splitting, layered dirty paper coding and successive cancellation. For the weak interference regime, the sum-rate capacity is achieved via dirty paper coding individually at two receivers as well as treating interference as noise. 
\end{abstract}

\section{Introduction}

State-dependent interference channels (ICs) are of great interest in wireless communications, in which receivers are interfered not only by other transmitters' signals but also by independent and identically distributed (i.i.d.) state sequences. The state can capture interference signals that are informed to transmitters, and are hence often assumed to be noncausally known by these transmitters in the model. Such interference cognition can occur in practical wireless networks due to node coordination or backhaul networks. 

Both the state-dependent IC and Z-IC have been studied in the literature. The state-dependent IC was studied in \cite{Zhang11a,Zhang11b} with two receivers corrupted by the same state, and in \cite{Ghas13ISWCS} with two receivers corrupted by independent states. In \cite{Somekh08,Duan15IT}, two state-dependent cognitive IC models were studied, where one transmitter knows both messages, and the two receivers are corrupted by two states. More recently, in\cite{Duan16IT}, both the state-dependent regular IC and Z-IC were studied, where the receivers are corrupted by the same but differently scaled state. Furthermore, in \cite{Duan13ITW,Ghas14}, a type of the state-dependent Z-IC was studied, in which only one receiver is corrupted by the state and the state information is known only to the other transmitter. In\cite{Kolte14}, a class of deterministic
state-dependent Z-ICs was studied, where two receivers are corrupted by the same state and the state information is known only to one transmitter. In \cite{Fehri2015Z}, a type of the state-dependent Z-IC with two states was studied, where each transmitter knows only the state that corrupts its corresponding receiver. In \cite{Haji13}, a state-dependent Z-interference broadcast channel was studied, in which one transmitter has only one message for its corresponding receiver, and the other transmitter has two messages respectively for two receivers. Both receivers are corrupted by the same state, which is known to both transmitters.

In all the previous work of the state-dependent IC and Z-IC, the states at two receivers are either assumed to be independent, or to be the same but differently scaled, with the exception of \cite{Fehri2015Z} that allows correlation between states. However, \cite{Fehri2015Z} assumes that each transmitter knows only one state at its corresponding receiver, and hence two transmitters cannot cooperate to cancel the states. In this paper, we investigate the state-dependent IC and Z-IC with the two receivers being corrupted respectively by two correlated states and with both transmitters knowing both states in order for them to cooperate. The main focus of this paper is on the Gaussian state-dependent IC and Z-IC, where the receivers are corrupted by additive interference, state, and noise. The aim is to design encoding and decoding schemes to handle interference as well as to cancel the state at the receivers. In particular, we are interested in answering the following two questions: (1) whether and under what conditions both states can be simultaneously fully canceled so that the capacity for the IC and Z-IC without state can be achieved; and (2) what is the impact of the correlation between two states on state cancellation and capacity achievability. 

We summarize our results as follows. Our novelty of designing achievable schemes lies in joint design of the interference cancellation schemes together with the Gel'fand-Pinsker binning \cite{Gelf80} and dirty paper coding \cite{Costa83} for state cancellation in order to characterize the capacity region. More specifically, we study three interference regimes. For the very strong interference regime, we characterize the channel parameters under which the two receivers achieve their corresponding point-to-point channel capacity without state and interference. Thus, the interference as well as states are fully canceled, and the capacity region is characterized as a rectangular region. In particular, we demonstrate the impact of the correlation between the two states in such a regime. Interestingly, we show that high interference may not always be beneficial for canceling both state and interference, which is in contrast to the IC without state. For the strong interference regime, we characterize the sum capacity boundary partially under certain channel parameters based on joint design of rate splitting, successive cancellation, as well as dirty paper coding. We also explain how the correlation affects the achievability of the sum capacity, and affects the comparison between the IC and Z-IC. For the weak interference regime, we observe that the sum capacity can be achieved by the two transmitters independently employing dirty paper coding and receiver 1 treating interference as noise as shown in \cite{Duan16IT} for the same but differently scaled state at the two receivers. The sum capacity is not affected by the correlation between states. 


\section {Channel Model}\label{sec:vs_model}

\begin{figure}[thb]
	\centering
	\includegraphics[width=3.5in]{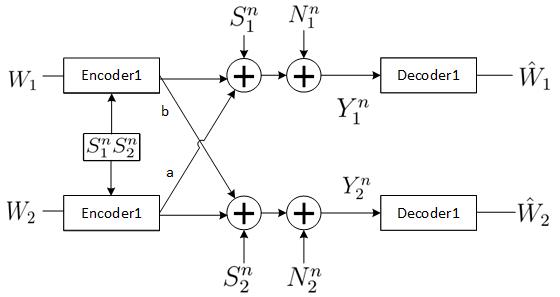}\label{fig:channel}
	\caption{The state-dependent IC}
\end{figure}

We consider the state-dependent IC (as shown in Fig.~\ref{fig:channel}), in which transmitters 1 and 2 send messages $W_{1}$ and $W_{2}$ respectively to the receivers 1 and 2. For $k=1,2$, encoder $k$ maps the message $w_k\in \cW_k$ to a codeword $x_k^n\in \cX_k^n$. The two inputs $x_1^n$ and $x_2^n$ are then transmitted over the IC to the receivers, which are corrupted by two correlated state sequences $S_1^n$ and $S_2^n$, respectively. The state sequences are known to both the transmitters noncausally, but are unknown at the receivers. Encoders 1 and 2 want to map their messages as well as the state sequences' information into codewords $x_1^n\in\cX_1^n$ and $x_2^n\in\cX_2^n$. The channel transition probability is given by $P_{Y_1Y_2|S_1S_2X_1X_2}$. The decoders at the receivers map the received sequences $y_1^n$ and $y_2^n$ into corresponding messages $\hw_k\in \cW_k$ for $k=1,2$.

The average probability of error for a length-$n$ code is defined as
\begin{flalign}\label{PE}
P_e^{(n)} = & \frac{1}{|\cW_1||\cW_2|}\sum_{w_1=1}^{|\cW_1|}\sum_{w_2=1}^{|\cW_2|} Pr\lbrace(\hat{w}_1, \hat{w}_2) \neq (w_1, w_2)\rbrace.
\end{flalign}
A rate pair $(R_1, R_2)$ is {\em achievable} if there exist a sequence of message sets $\cW_{k}^{(n)}$ with $|\cW_{k}^{(n)}|=2^{nR_k}$ for $k=1, 2$, such that the average error probability $P_e^{(n)} \rightarrow 0$ as $n \to \infty$. The {\em capacity region} is defined to be the closure of the set of all achievable rate pairs $(R_1, R_2)$.

In this paper, we study the Gaussian channel with the outputs at the two receivers for one channel use given by
\begin{subequations}
	\begin{flalign}
	Y_1&=X_1+ aX_2+S_1+N_1,\label{eq:GeneralChannelModel}\\
	Y_2&=bX_1+X_2+S_2+N_2
	\end{flalign}
\end{subequations}	
where $a$ and $b$ are the channel gain coefficients, and $N_1$ and $N_2$ are noise variables with Gaussian distributions $N_1 \sim \mathcal{N}(0,1)$ and $N_2 \sim \mathcal{N}(0,1)$. The state variables $S_1$ and $S_2$ are jointly Gaussian with the correlation coefficient $\rho$ and the marginal distributions $S_1 \sim \mathcal{N}(0,Q_1)$ and $S_2\sim \mathcal{N}(0,Q_2)$. Both the noise variables and the state variables are i.i.d. over the channel uses. The channel inputs $X_1$ and $X_2$ are subject to the average power constraints $P_1$ and $P_2$. 

The Z-IC is also studied in this paper, in which receiver 2 is not corrupted by the interference $X_1$ from transmitter 1 (i.e, $b$=0 for Gaussian channel).

Our goal is to characterize channel parameters, under which the capacity of the corresponding IC and Z-IC without the presence of the state can be achieved, and thus the capacity region of the IC and Z-IC with the presence of state is also established. In particular, we are interested in understanding the impact of the correlation between the states $S_1$ and $S_2$ on the capacity characterization.

\section {Very Strong Interference Regime}\label{sec:vs_regime}

In this section, we study the impact of the correlation between states on the characterization of the capacity in the very strong interference regime. We study both the state-dependent IC and Z-IC.

\subsection{State-Dependent IC}\label{vs_regular}
In this subsection, we study the state-dependent IC in the very strong interference regime, where the channel parameters satisfy
\begin{subequations}\nn
	\begin{flalign}
		P_1+a^2P_2+1&>(1+P_1)(1+P_2),\\
		b^2P_1+P_2+1&>(1+P_1)(1+P_2).
	\end{flalign}
\end{subequations}	
For the corresponding IC without states, the capacity region contains rate pairs ($R_1,R_2$) satisfying:
\begin{equation}\label{cap:VeryStrong}
	\begin{aligned}
		R_1 \leqslant & \frac{1}{2}\log(1+P_1),\\
		R_2 \leqslant & \frac{1}{2}\log(1+P_2).
	\end{aligned}
\end{equation}	
In this case, the two receivers achieve the point-to-point channel capacity without interference. Furthermore, in \cite{Duan16IT}, an achievable scheme has been established to achieve the same point-to-point channel capacity when the two receivers are corrupted by the same but differently scaled state. Our focus here is on the more general scenario, where the two receivers are corrupted by two {\em correlated} states, and our aim is to understand how the correlation affects the design of the scheme.


We first design an achievable scheme to obtain an achievable rate region for the discrete memoryless IC. The two transmitters encode their messages $W_1$ and $W_2$ into two auxiliary random variables $U$ and $V$, respectively, based on the Gel'fand-Pinsker binning scheme. Since the channel satisfies the very strong interference condition, it is easier for receivers to decode the information of the interference. Thus each receiver first decodes the auxiliary random variable corresponding to the message intended for the other receiver, and then decodes its own message by decoding the auxiliary random variable for itself. For instance, receiver 1 first decodes $V$, then uses it to cancel the interference $X_2$ and partial state interference, and finally decodes its own message $W_1$ by decoding $U$. Differently from \cite{Duan16IT}, two auxiliary random variables $U$ and $V$ are designed not with regard to one state, but with regard to two correlated states. This requires a joint design for $U$ and $V$ to fully cancel the states. Based on such a scheme, we obtain the following achievable region.

\begin{proposition}\label{pps:IC inner}
	For the state-dependent IC with states noncausally known at both transmitters, the achievable region consists of rate pairs $(R_1,R_2)$ satisfying:
	\begin{subequations}
		\begin{flalign}
		R_1 \leqslant & \min\{ I(U;VY_1),I(U;Y_2)\}-I(S_1S_2;U),\label{eq:pps1-1} \\
		R_2 \leqslant & \min\{ I(V;UY_2),I(V;Y_1)\}-I(S_1S_2;V)\label{eq:pps1-2}
		\end{flalign}
	\end{subequations}
	for some distribution $P_{S_1S_2}P_{U|S_1S_2}P_{X_1|US_1S_2}P_{V|S_1S_2}P_{X_2|VS_1S_2}P_{Y_1Y_2|S_1S_2X_1X_2}$, where $U$ and $V$ are auxiliary random variables.
\end{proposition}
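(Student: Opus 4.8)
The plan is to establish this as a standard Gel'fand--Pinsker binning achievability result for a two-user interference channel, adapted to the feature that both auxiliary variables $U$ and $V$ are generated jointly against the pair of correlated states $(S_1,S_2)$. First I would fix a joint distribution of the claimed product form and generate two codebooks: for transmitter~1, a bin-indexed collection of $U$-codewords, and for transmitter~2, a bin-indexed collection of $V$-codewords, each generated i.i.d.\ according to the marginals $P_U$ and $P_V$ induced by the fixed distribution. The number of $U$-sequences per bin is chosen to be $2^{n(I(S_1S_2;U)+\epsilon)}$ and similarly $2^{n(I(S_1S_2;V)+\epsilon)}$ for $V$, so that the binning rates cover the cost of correlating the auxiliary codewords with the states.

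The encoding step is the Gel'fand--Pinsker selection. Upon observing $(S_1^n,S_2^n)$, encoder~$k$ searches within the bin indexed by its message for a codeword ($U^n$ or $V^n$) that is jointly typical with $(S_1^n,S_2^n)$; by the covering lemma, such a codeword exists with high probability precisely because the per-bin size exceeds $I(S_1S_2;\cdot)$. The channel inputs $X_1^n$ and $X_2^n$ are then drawn through the conditional kernels $P_{X_1|US_1S_2}$ and $P_{X_2|VS_1S_2}$, which is exactly what the assumed factorization permits. I would note here that although the two encoders act separately, both have access to the \emph{same} pair $(S_1^n,S_2^n)$, so the resulting codeword pair $(U^n,V^n)$ is correlated through the common state pair even though no explicit message exchange occurs; this is the structural point that the joint design against $(S_1,S_2)$ is meant to exploit.

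For decoding I would implement the two-step successive scheme motivated in the text: each receiver first decodes the \emph{other} user's auxiliary variable, then its own. At receiver~2, I first decode $V$ from $Y_2^n$ alone (giving the constraint through $I(V;Y_2)$-type terms once binning is accounted for), then, having $V^n$, decode $U$ jointly from $(V^n,Y_2^n)$, yielding the $I(U;Y_2)$ branch of the minimum in~\eqref{eq:pps1-1}; symmetrically, receiver~1 first decodes $U$ and then decodes $V$ jointly with $(U^n,Y_1^n)$, yielding the $I(V;Y_1)$ branch of~\eqref{eq:pps1-2}. Applying the packing lemma to each joint-typicality decoding step and subtracting the binning rate $I(S_1S_2;U)$ or $I(S_1S_2;V)$ gives the net rates. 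The two competing upper bounds inside each minimum arise from the fact that $U$ must be decodable both at its own receiver (aided by the already-decoded $V$, contributing $I(U;VY_1)$) and at the opposite receiver (contributing $I(U;Y_2)$), and analogously for $V$; taking the minimum guarantees reliability at both decoders simultaneously.

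The main obstacle I expect is the careful error-event bookkeeping when the binning error, the covering-lemma failure, and the several packing-lemma misdecoding events are combined across both encoders and both decoders, all while the auxiliary pair $(U,V)$ is correlated through the shared states. In particular I would have to verify that the two-stage decoding at each receiver is consistent---that is, that decoding $V$ first at receiver~2 does not require conditioning on side information unavailable at that step---and that the joint typicality sets used in the covering and packing arguments are defined with respect to the \emph{same} fixed joint distribution so that the standard typicality lemmas apply verbatim. Once the error probability is shown to vanish by a union bound under the stated rate constraints, taking $\epsilon\to 0$ and invoking a standard time-sharing/closure argument completes the proof.
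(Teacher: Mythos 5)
Your overall architecture (Gel'fand--Pinsker binning against the pair $(S_1,S_2)$, covering at the encoders with per-bin rates exceeding $I(S_1S_2;U)$ and $I(S_1S_2;V)$, packing at the decoders, union bound) is exactly the paper's approach, and your opening statement of the decoding plan --- ``each receiver first decodes the \emph{other} user's auxiliary variable, then its own'' --- is also the right idea. But your concrete instantiation of that plan is swapped, and as written it would not prove the stated proposition. You have receiver~2 decode $V$ first from $Y_2^n$ alone and receiver~1 decode $U$ first from $Y_1^n$ alone; since $V$ carries $W_2$ (receiver~2's own message) and $U$ carries $W_1$ (receiver~1's own message), this is each receiver decoding its \emph{own} auxiliary first while treating the interference as noise --- the opposite of your stated plan, and the wrong order for the very strong interference regime. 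The constraints this scheme actually produces are $R_2+\tilde{R}_2\leqslant I(V;Y_2)$, $R_1+\tilde{R}_1\leqslant I(U;VY_2)$, $R_1+\tilde{R}_1\leqslant I(U;Y_1)$, and $R_2+\tilde{R}_2\leqslant I(V;UY_1)$, none of which appear in \eqref{eq:pps1-1}--\eqref{eq:pps1-2}. Relatedly, your bookkeeping of which step yields which term is incorrect: decoding $U$ jointly from $(V^n,Y_2^n)$ gives the packing constraint $I(U;VY_2)$, not the $I(U;Y_2)$ branch you claim, and decoding $V$ from $(U^n,Y_1^n)$ gives $I(V;UY_1)$, not $I(V;Y_1)$. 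The unconditioned terms $I(U;Y_2)$ and $I(V;Y_1)$ can only arise from \emph{first-step} decoding without side information.

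The fix is to carry out your stated plan literally, which is what the paper does. Receiver~1: first find $(\hat{w}_2,\hat{l}_2)$ with $(v^n,y_1^n)$ jointly typical, requiring $R_2+\tilde{R}_2\leqslant I(V;Y_1)$; then find $(\hat{w}_1,\hat{l}_1)$ with $(u^n,v^n,y_1^n)$ jointly typical, requiring $R_1+\tilde{R}_1\leqslant I(U;VY_1)$. Receiver~2: first find $(\hat{w}_1,\hat{l}_1)$ with $(u^n,y_2^n)$ jointly typical, requiring $R_1+\tilde{R}_1\leqslant I(U;Y_2)$; then find $(\hat{w}_2,\hat{l}_2)$ with $(v^n,u^n,y_2^n)$ jointly typical, requiring $R_2+\tilde{R}_2\leqslant I(V;UY_2)$. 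Combining these four packing constraints with the covering constraints $\tilde{R}_1>I(U;S_1S_2)$ and $\tilde{R}_2>I(V;S_1S_2)$ yields precisely \eqref{eq:pps1-1} and \eqref{eq:pps1-2}. With that correction, the rest of your argument (error-event accounting and the observation that the encoders' codewords are correlated only through the common state pair) goes through as you outline.
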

\begin{proof}
	See Appendix \ref{apx:IC inner}.
\end{proof}

We now study the Gaussian IC. For the sake of technical convenience, we express the Gaussian channel in Section \ref{sec:vs_model} in a different form. Since $S_1$ and $S_2$ are jointly Gaussian, $S_1$ can be expressed as $S_1=dS_2+S_1^\prime$ where $d$ is a constant representing the level of correlation, and $S_1'$ is independent from $S_2$ and $S_1'\sim \mathcal{N}(0, Q_1')$ with $Q_1=d^2Q_2+Q_1^\prime$. Thus, without loss of generality, the channel model can be expressed in the following equivalent form that is more convenient for analysis.
\begin{subequations}
	\begin{flalign}
		Y_1&=X_1+ aX_2+dS_2+S_1^\prime+N_1,\\
		Y_2&=bX_1+X_2+S_2+N_2.
	\end{flalign}
\end{subequations}	

Following Proposition \ref{pps:IC inner}, we characterize the condition under which both the state and interference can be fully canceled, and hence the capacity region for the state-dependent Gaussian IC in the very strong interference regime is obtained.

\begin{theorem}\label{thr:IC inner}
	For the state-dependent Gaussian IC with state noncausally known at both transmitters, the capacity region is the same as the point-to-point channel capacity for both receivers, if the channel parameters  $(a,b,d,P_1,P_2,Q_1^\prime,Q_2)$ satisfy the following conditions:
	\begin{subequations}
		\begin{flalign}
		\frac{1}{2}\log(1+P_1)\leqslant& h(X_1)-h(U,Y_2)+h(Y_2),\label{eq:cond1}\\
		\frac{1}{2}\log(1+P_2)\leqslant& h(X_2)-h(V,Y_1)+h(Y_1),\label{eq:cond2}
		\end{flalign}
	\end{subequations}
where the auxiliary random variables are designed as $U=X_1+\alpha_1S_1^\prime+\alpha_2S_2$ and $V=X_2+\beta_1S_1^\prime+\beta_2S_2$. Here, $X_1$,$X_2$, $ S_1^\prime $ and $ S_2 $ are independent Gaussian variables with mean zero and variances $P_1$,$P_2$, $ Q_1 $ and $ Q_2 $, respectively. The parameters $\alpha_1$,$\alpha_2$,$\beta_1$ and $\beta_2$ are set as
\begin{equation}\label{eq:s_variablesetting}
	\begin{aligned}
		\alpha_1&=\frac{P_1(1+P_2)}{(P_1+1)(P_2+1)-abP_1P_2},\ \ \ \  \alpha_2=\frac{P_1(d+dP_2-aP_2)}{(P_1+1)(P_2+1)-abP_1P_2},\\
		\beta_1&=\frac{bP_1P_2}{(P_1+1)(P_2+1)-abP_1P_2},\ \ \ \  \beta_2=\frac{P_2(P_1+1-bdP_1)}{(P_1+1)(P_2+1)-abP_1P_2}.
	\end{aligned}	
\end{equation}
	\end {theorem}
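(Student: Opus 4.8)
The plan is to prove both inclusions of the claimed rectangular region, with the converse being immediate and essentially all the work lying in achievability. For the converse, note that if a genie reveals to receiver~$1$ the interfering input $X_2^n$ and its own state $S_1^n$, then receiver~$1$ can form $Y_1^n-aX_2^n-S_1^n=X_1^n+N_1^n$, so its link is dominated by a point-to-point AWGN channel of power $P_1$ and unit noise, forcing $R_1\leqslant\frac{1}{2}\log(1+P_1)$; the symmetric genie gives $R_2\leqslant\frac{1}{2}\log(1+P_2)$. These match \eqref{cap:VeryStrong}, so it remains to show the corner point $\bigl(\frac{1}{2}\log(1+P_1),\frac{1}{2}\log(1+P_2)\bigr)$ is achievable, after which the whole rectangle follows from the definition of the capacity region.

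For achievability I would evaluate the region of Proposition~\ref{pps:IC inner} at the jointly Gaussian auxiliaries $U=X_1+\alpha_1S_1'+\alpha_2S_2$ and $V=X_2+\beta_1S_1'+\beta_2S_2$. Since conditioning on $(S_1,S_2)$ (equivalently on $(S_1',S_2)$) turns $U$ into $X_1$ plus a constant, each term collapses to a difference of differential entropies; in particular $I(U;Y_2)-I(S_1S_2;U)=h(X_1)-h(U,Y_2)+h(Y_2)$ and $I(V;Y_1)-I(S_1S_2;V)=h(X_2)-h(V,Y_1)+h(Y_1)$, which are exactly the right-hand sides of \eqref{eq:cond1} and \eqref{eq:cond2}. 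Thus conditions \eqref{eq:cond1}--\eqref{eq:cond2} are precisely the statements $I(U;Y_2)-I(S_1S_2;U)\geqslant\frac{1}{2}\log(1+P_1)$ and $I(V;Y_1)-I(S_1S_2;V)\geqslant\frac{1}{2}\log(1+P_2)$.

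The heart of the argument is to show that the two dirty-paper terms equal the clean point-to-point rates. Writing $I(U;VY_1)-I(S_1S_2;U)=h(X_1)-h(U\mid V,Y_1)$ and lower-bounding the conditional entropy by further conditioning on the states, one has $h(U\mid V,Y_1)\geqslant h(U\mid V,Y_1,S_1,S_2)=\frac{1}{2}\log\!\bigl(2\pi e\,\tfrac{P_1}{1+P_1}\bigr)$, because given $V$ and the states one recovers $X_2$, and then $Y_1$ yields the clean observation $X_1+N_1$. Hence $I(U;VY_1)-I(S_1S_2;U)\leqslant\frac{1}{2}\log(1+P_1)$, with equality if and only if $U$ is conditionally independent of $(S_1,S_2)$ given $(V,Y_1)$, i.e. the states are fully canceled; the symmetric bound gives $I(V;UY_2)-I(S_1S_2;V)\leqslant\frac{1}{2}\log(1+P_2)$ with equality iff $V\perp(S_1,S_2)\mid(U,Y_2)$. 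For jointly Gaussian variables these conditional independences are equivalent to the vanishing of the partial covariances $\mathrm{Cov}(U,S_1'\mid V,Y_1)=\mathrm{Cov}(U,S_2\mid V,Y_1)=0$ and $\mathrm{Cov}(V,S_1'\mid U,Y_2)=\mathrm{Cov}(V,S_2\mid U,Y_2)=0$, a system of four linear equations in $\alpha_1,\alpha_2,\beta_1,\beta_2$. I would then verify that \eqref{eq:s_variablesetting} is exactly its solution, the common denominator $D=(P_1+1)(P_2+1)-abP_1P_2$ arising as the determinant that couples the two receivers' designs.

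I expect the covariance bookkeeping in the third step to be the main obstacle: the equations are coupled because $Y_1$ sees the states through $dS_2+S_1'$ and the interference through $aX_2$, while $V$ re-injects the states with coefficients $\beta_1,\beta_2$ (and symmetrically at receiver~$2$ through $bX_1$ and $U$), so sign and cross-term errors are easy to make. Once \eqref{eq:s_variablesetting} is confirmed to annihilate these partial covariances, the two dirty-paper terms attain $\frac{1}{2}\log(1+P_1)$ and $\frac{1}{2}\log(1+P_2)$; combined with \eqref{eq:cond1}--\eqref{eq:cond2}, which force the competing terms $I(U;Y_2)-I(S_1S_2;U)$ and $I(V;Y_1)-I(S_1S_2;V)$ to be no smaller, each minimum in \eqref{eq:pps1-1}--\eqref{eq:pps1-2} is attained by its dirty-paper term. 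This makes the corner point achievable, and together with the converse it yields the stated rectangular capacity region.
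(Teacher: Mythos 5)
Your proposal is sound and shares its skeleton with the paper's proof: both evaluate Proposition~\ref{pps:IC inner} at the same Gaussian auxiliaries, both use $h(U\mid S_1S_2)=h(X_1)$ to identify \eqref{eq:cond1}--\eqref{eq:cond2} with the cross-link requirements $I(U;Y_2)-I(S_1S_2;U)\geqslant\tfrac{1}{2}\log(1+P_1)$ and $I(V;Y_1)-I(S_1S_2;V)\geqslant\tfrac{1}{2}\log(1+P_2)$, and both reduce the theorem to showing that the remaining terms in \eqref{eq:pps1-1}--\eqref{eq:pps1-2} equal the clean point-to-point rates. Where you genuinely differ is in that last step. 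The paper argues constructively: receiver 1 forms $Y_1'=Y_1-aV=X_1+(1-a\beta_1)S_1'+(d-a\beta_2)S_2+N_1$, views $(U,Y_1')$ as a scalar Costa pair, and imposes the dirty-paper conditions \eqref{eq:dirty_con1}--\eqref{eq:dirty_con4}, which are genuinely linear in $(\alpha_1,\alpha_2,\beta_1,\beta_2)$ and are solved directly. You instead prove the universal bound $I(U;VY_1)-I(S_1S_2;U)=h(X_1)-h(U\mid V,Y_1)\leqslant\tfrac{1}{2}\log(1+P_1)$ and characterize equality as $U\perp(S_1,S_2)$ given $(V,Y_1)$, i.e.\ vanishing partial covariances. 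This dual route buys two things the paper does not give: it shows the coefficient choice is optimal, not merely sufficient, within this class of auxiliaries, and your genie converse makes explicit an outer bound the paper never proves. One inaccuracy: your four partial-covariance equations are not a linear system in the four coefficients jointly (they are bilinear; each receiver's pair is linear only with the other pair frozen), so the clean linear structure belongs to the paper's Costa-form conditions rather than to yours.

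A substantive warning about the verification you defer, which is indeed where all the work lies: carried out against \eqref{eq:s_variablesetting} exactly as printed, it will fail. Solving \eqref{eq:dirty_con2} together with \eqref{eq:dirty_con4} forces $\beta_1=-\tfrac{bP_2}{P_2+1}\alpha_1=\tfrac{-bP_1P_2}{(P_1+1)(P_2+1)-abP_1P_2}$, opposite in sign to the theorem's display. The reason is structural: the residual state at receiver 2 after subtracting $bU$ is $-b\alpha_1S_1'+(1-b\alpha_2)S_2$, so the $S_1'$-coefficient of $V$ must be negative when $b,\alpha_1>0$; consistently, the printed $\alpha_1$ carries the denominator $(P_1+1)(P_2+1)-abP_1P_2$, which only arises from this negative $\beta_1$. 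Your conditional-independence conditions hold at the sign-corrected $\beta_1$, not the printed one, so your method, executed honestly, would expose this typo rather than confirm the display. With that correction the plan closes exactly as you describe: all four partial covariances vanish, both dirty-paper terms attain $\tfrac{1}{2}\log(1+P_1)$ and $\tfrac{1}{2}\log(1+P_2)$, and \eqref{eq:cond1}--\eqref{eq:cond2} make them the active minima in \eqref{eq:pps1-1}--\eqref{eq:pps1-2}, which together with your converse yields the rectangular capacity region.
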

	
	\begin{proof}
		 The proof mainly follows Proposition \ref{pps:IC inner}. We design $U$ and $V$ as stated in Theorem \ref{thr:IC inner}.  As discussed in the proof of Proposition \ref{pps:IC inner}, $V$ is first decoded by receiver 1 and $U$ is first decoded by receiver 2. And then  receiver 2 subtracts $U$ to cancel $X_1$ and obtain  $Y_2^{\prime}=Y_2-bU=X_2-b\alpha_1S_1^\prime+(1-b\alpha_2)S_2+N_2$, and receiver 1 subtracts $V$ to cancel $X_2$ and obtain  $Y_1^{\prime}=Y_1-aV=X_1+(1-a\beta_1)S_1^\prime+(d-a\beta_2)S_2+N_1$. In order to fully cancel the channel states for $Y_1^{\prime}$ and $Y_2^{\prime}$, based on the dirty paper coding scheme, we further require the coefficients to satisfy the following conditions,
		\begin{subequations}
			\begin{flalign}
			\frac{\alpha_1}{1-a\beta_1}&=\frac{\alpha_2}{d-a\beta_2}\label{eq:dirty_con1}\\
			\frac{\alpha_1}{1-a\beta_1}&=\frac{P_1}{P_1+1}\label{eq:dirty_con2}\\
			\frac{\beta_1}{-b\alpha_1}&=\frac{\beta_2}{1-b\alpha_2}\label{eq:dirty_con3}\\
			\frac{\beta_1}{-b\alpha_1}&=\frac{P_2}{P_2+1}\label{eq:dirty_con4}
			\end{flalign}
		\end{subequations}		
		which yield $\alpha_1$,$\alpha_2$,$\beta_1$ and $\beta_2$ in \eqref{eq:s_variablesetting}.
		
		By plugging these parameters into \eqref{eq:pps1-1}, we obtain $$I(U;VY_1)-I(S_1,S_2;U)= \frac{1}{2}\log(1+P_1),$$
		which yields $$R_1 \leqslant \min\{I(U;Y_2)-I(S_1,S_2;U), \frac{1}{2}\log(1+P_1)\}.$$

		Similarly, \eqref{eq:pps1-2} yields $$R_2 \leqslant \min\{I(V;Y_1)-I(S_1,S_2;V), \frac{1}{2}\log(1+P_2)\}.$$
	In order to achieve the channel capacity of the point-to-point channel as shown in \eqref{cap:VeryStrong} for both receivers, the following conditions should be satisfied:
	\begin{subequations}
		\begin{flalign}
		\frac{1}{2}\log(1+P_1)&\leqslant I(U;Y_2)-I(S_1,S_2;U)\label{eq:thr1_con1}\\
		\frac{1}{2}\log(1+P_2)&\leqslant I(V;Y_1)-I(S_1,S_2;V) \label{eq:thr1_con2},
		\end{flalign}
	\end{subequations}	
which completes the proof.
	\end{proof}	
	We note that the conditions in Theorem \ref{thr:IC inner} represent the comparison between the ability of receivers to decode messages in different decoding steps. For instance, in condition \eqref{eq:cond1} the right-hand side term represents how much receiver 2 can decode $U$ in the first step of decoding in order to cancel the interference, and the left-hand side term represents the rate at which receiver 1 can decode $U$ in the second step of decoding, where we can use the dirty paper coding scheme to fully cancel the states and achieve the capacity. Hence, achieving the point-to-point channel capacity requires the second step to dominate the performance.
		\begin{figure}[thb]
		\centering
		\begin{tabular}{ccc} 
			\includegraphics[width=2in]{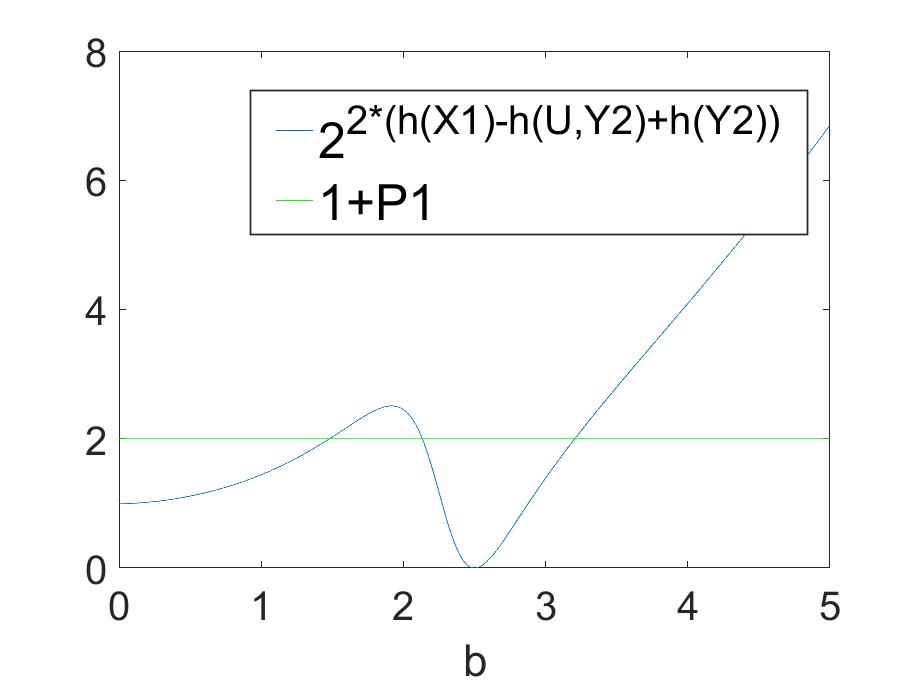}
			&\includegraphics[width=2in]{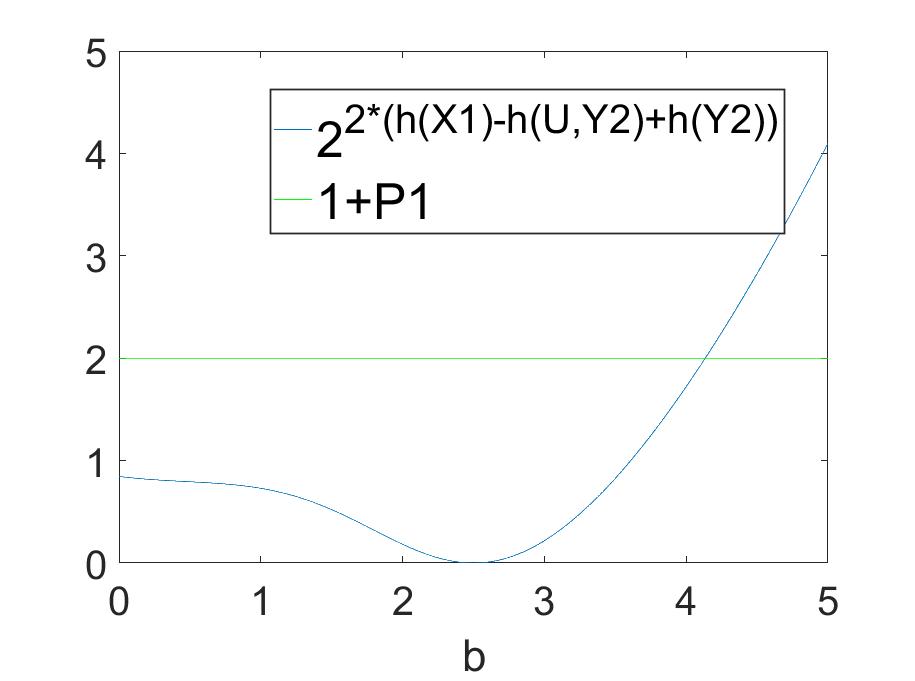}
			&\includegraphics[width=2in]{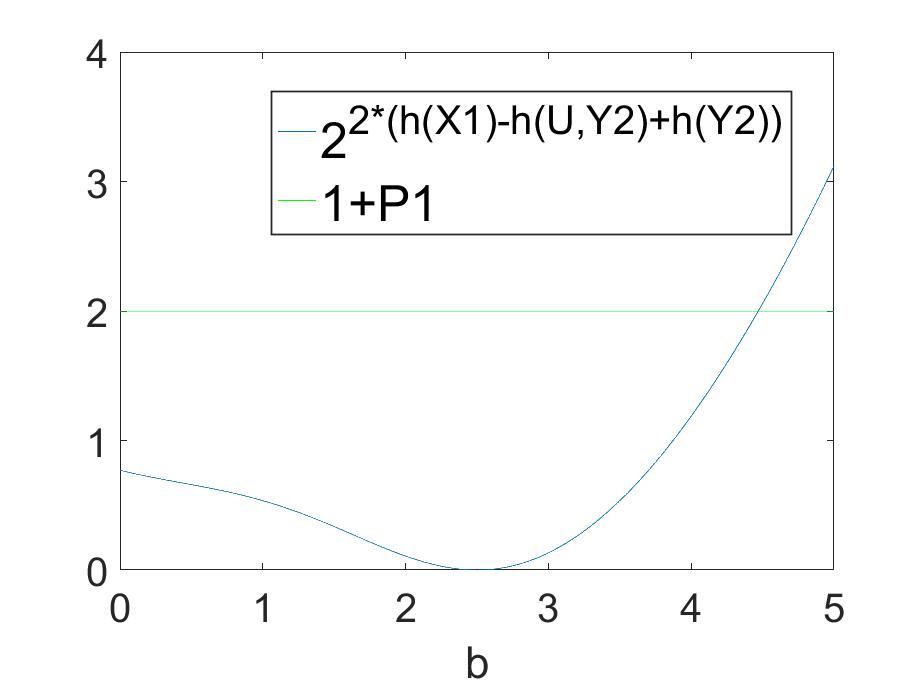}\\
			$d=0.99$&$d=0.5$&$d=0.1$\\
		\end{tabular}
		\begin{tabular}{ccc} 
			\includegraphics[width=2in]{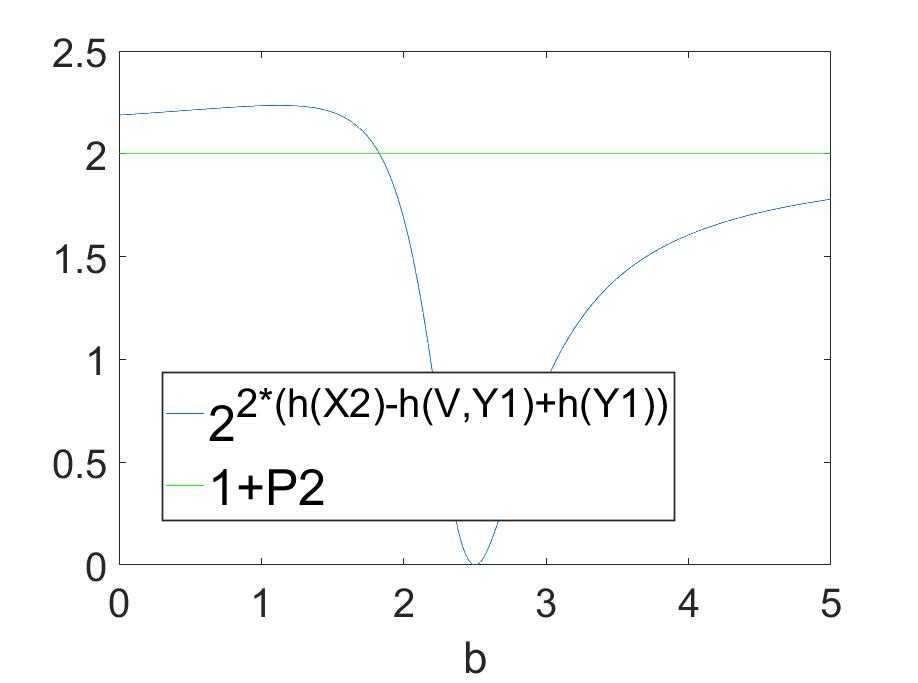}
			&\includegraphics[width=2in]{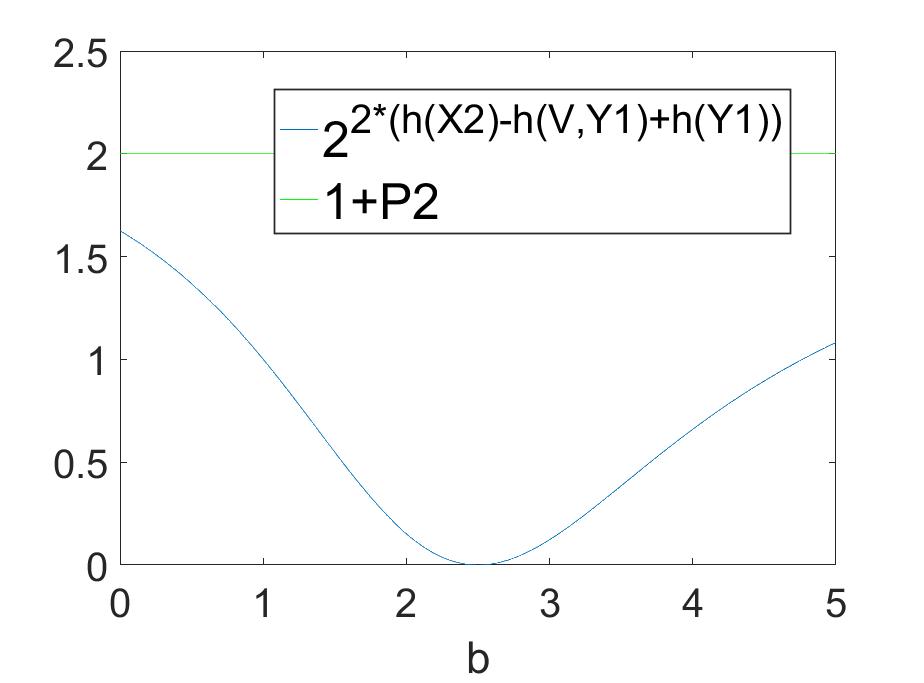}
			&\includegraphics[width=2in]{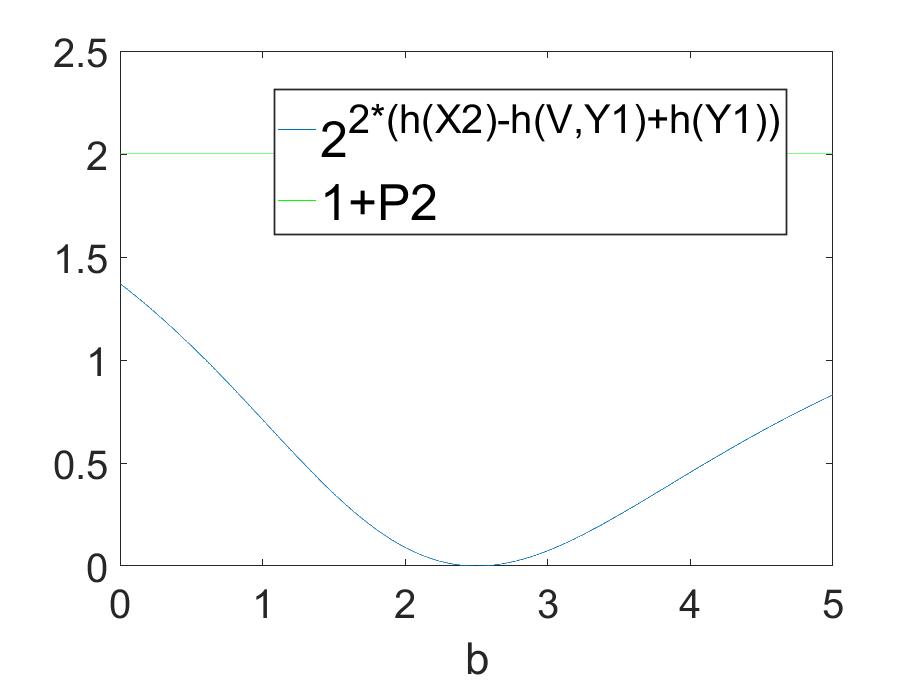}\\
			$d=0.99$&$d=0.5$&$d=0.1$\\
		\end{tabular}
		\caption{Comparison of values of both sides of \eqref{eq:cond1_1} and \eqref{eq:cond2_1}}\label{fig:b_beta}
	\end{figure}

	We next study the impact of the channel parameters and state correlation on the achievablility of the point-to-point capacity. In particular, we illustrate how the interference gains $(a,b)$ affect the conditions (\ref{eq:cond1}) and (\ref{eq:cond2}). To make the figure more clear, we take exponential of both sides, and hence the conditions (\ref{eq:cond1}) and  (\ref{eq:cond2}) become:
	\begin{subequations}
		\begin{flalign}
		1+P_1\leqslant& 2^{2(h(X_1)-h(U,Y_2)+h(Y_2))},\label{eq:cond1_1}\\
		1+P_2\leqslant& 2^{2(h(X_2)-h(V,Y_1)+h(Y_1))}.\label{eq:cond2_1}
		\end{flalign}
	\end{subequations}	
%
In Fig. \ref{fig:b_beta}, we set $Q_1=Q_2=0.9, P_1=1,P_2=1$ and $a=1.6$, and plot the change of both left-hand side terms and right-hand side terms in  (\ref{eq:cond1_1}) and  (\ref{eq:cond2_1}) versus the channel parameters $ b $ for three different values of $ d $. Taking the first row of Fig. \ref{fig:b_beta} as an example, it is clear that $1+P_1$ is a straight line, and $2^{2(h(X_1)-h(U,Y_2)+h(Y_2))}$ is not a monotone function with respect to $ b $. The condition \eqref{eq:cond1_1} (and hence \eqref{eq:cond1}) is satisfied only when $2^{2(h(X_1)-h(U,Y_2)+h(Y_2))}$ is above the straight line  $1+P_1$. When the parameter $d=0.99$, there are two regions over which the condition (\ref{eq:cond1_1}) is satisfied. But if $d=0.5$, there is only one region over which the condition (\ref{eq:cond1_1}) is satisfied. For $d=0.1$, there is also only one region where the condition (\ref{eq:cond1_1}) is satisfied. Similarly, the second row in Fig. \ref{fig:b_beta} illustrates the regions of $b$ over which the condition (\ref{eq:cond2_1}) is satisfied for the corresponding values of $ d $. Then the intersection of the region of $ b $ in the first and second rows of Fig. \ref{fig:b_beta} fully determines the ranges of $ b $ over which the point-to-point channel capacity can be achieved for both receivers.
		
\begin{figure}[thb]
	
	\begin{tabular}{ccc} 
		\includegraphics[width=2in]{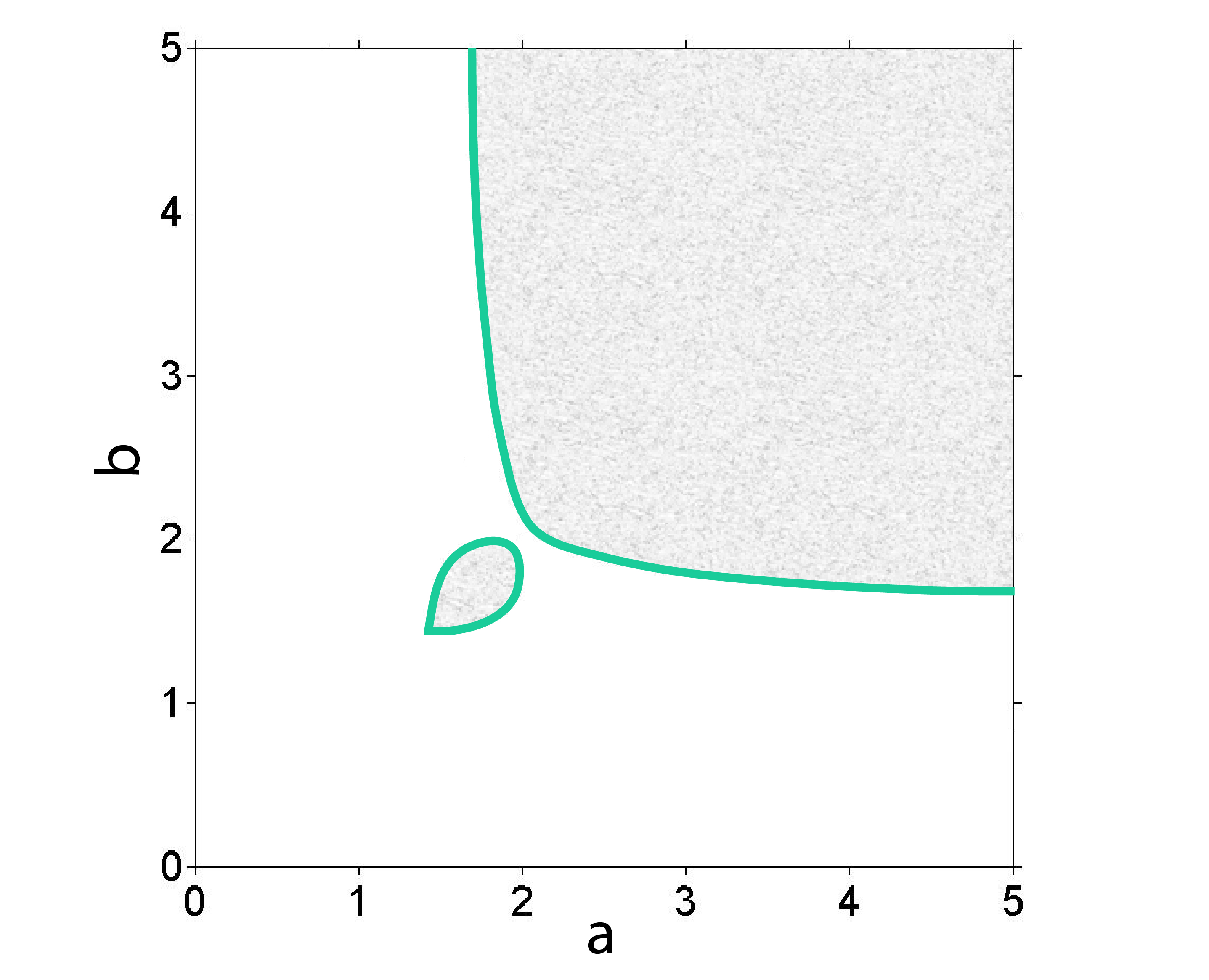}
		&\includegraphics[width=2in]{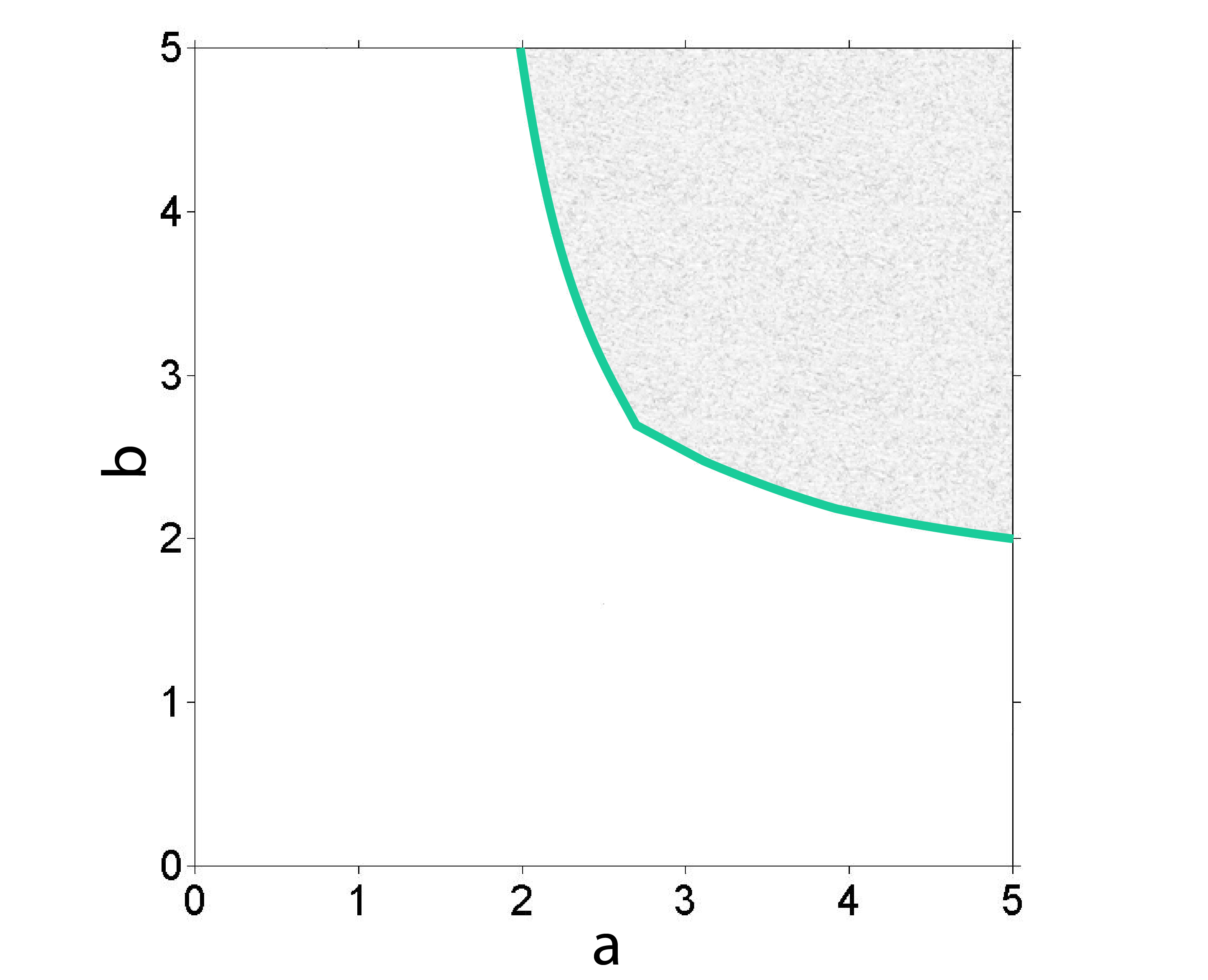}
		&\includegraphics[width=2in]{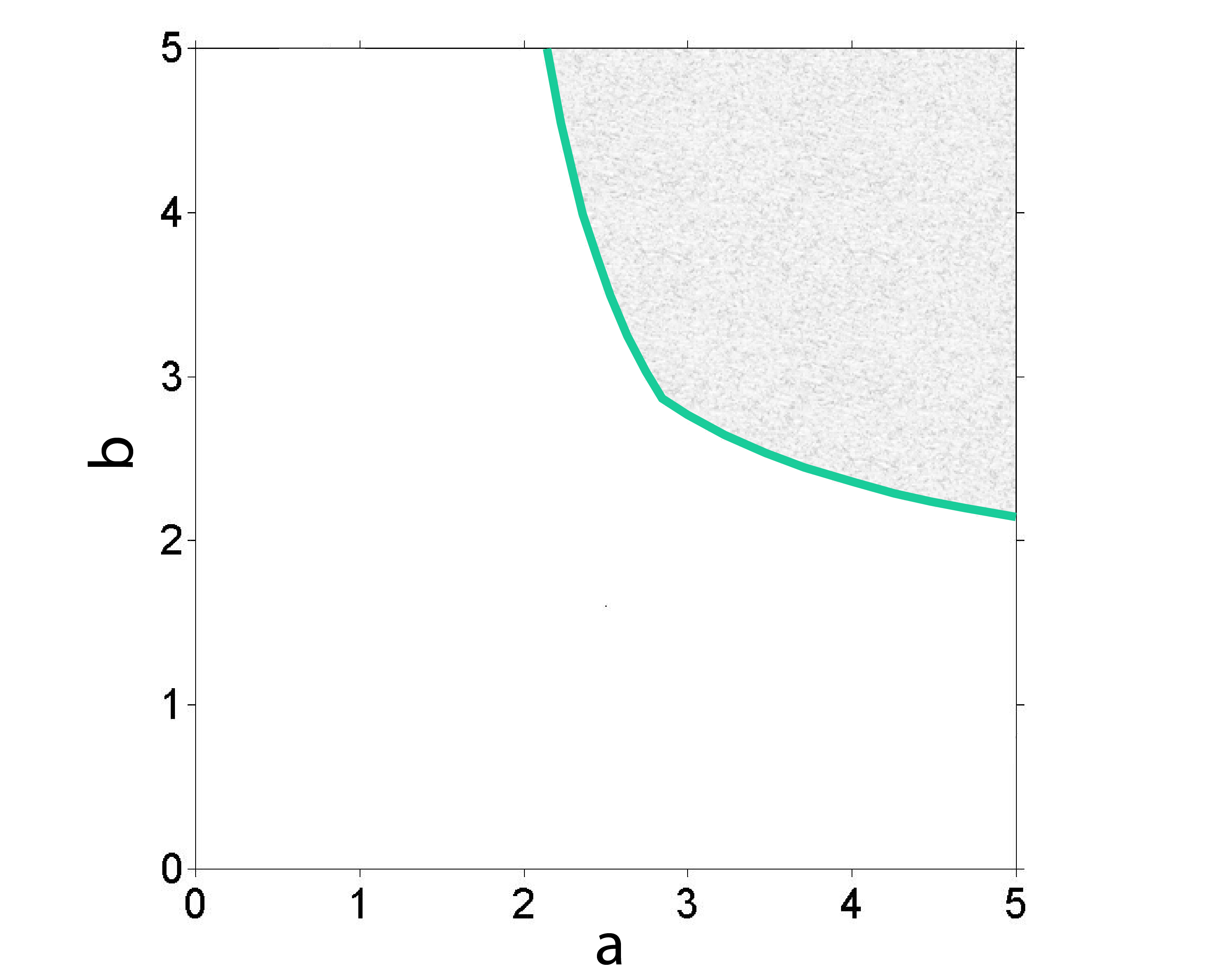}\\
		$d=0.99$&$d=0.5$ &$d=0.1$
	\end{tabular}
	\caption{Ranges of (a,b) under which the point-to-point channel capacity is achieved.}\label{fig:c}
\end{figure}
	
The range of the parameters $(a,b)$ such that the point-to-point channel capacity is obtained is shown in Fig. \ref{fig:c}. For these figures, if we fix $a=1.6$, then the ranges of b is consistent with those in Fig. \ref{fig:c} where both (\ref{eq:cond1_1}) and (\ref{eq:cond2_1}) are satisfied. Fig. \ref{fig:c} also illustrates the impact of the correlation $d$ between the states $S_1$ and $S_2$ on the achievability of channel capacity. It is clear that as $d$ increases, i.e., the two states are more correlated, the range of $(a,b)$ over which the point-to-point channel capacity is achieved gets larger. This confirms the intuition that more correlated states are easier to be fully canceled.

\subsection{State-Dependent Z-IC}\label{vs_zic}
In this subsection, we study the state-dependent Z-IC in the very strong interference regime, in which the channel parameters satisfy $a^2>1+P_1$. Similarly to the state-dependent IC, we study the ranges of the channel parameters, over which the capacity is also characterized by \eqref{cap:VeryStrong}. We note that here Z-IC (with $b=0$) cannot be viewed as a special case of the IC in the very strong interference regime.

We first design an achievable scheme to obtain an achievable rate region for the discrete memoryless Z-IC. The two transmitters encode their messages $W_1$ and $W_2$ into two auxiliary random variables $U$ and $V$, respectively, based on the Gel'€fand-Pinsker binning scheme. Since receiver 2 is interference free and is corrupted by $S_2$, the auxiliary random variable $V$ is designed with regard to only $S_2$. Furthermore,
receiver 1 first decodes $V$, then uses it to cancel the interference $X_2$ and partial state interference, and finally decodes its own message $W_1$ by decoding $U$. Here, since $S_2$ is introduced to $Y_1$ when canceling $X_2$ via $V$, the auxiliary random variable $U$ is designed based on both $S_1$ and $S_2$ to fully cancel the states. Based on such a scheme, we obtain the following achievable region.
\begin{proposition}\label{pps:Z_inner}
	For the state-dependent Z-IC with the states noncausally known at both transmitters, an achievable region consists of rate pairs $(R_1,R_2)$ satisfying:
	\begin{subequations}
		\begin{flalign}
		R_1 \leqslant & I(U;VY_1) -I(S_1S_2;U) \label{eq:zv_inner1}\\
		R_2 \leqslant &  \min\{ I(V;Y_2),I(V;Y_1)\} -I(S_2;V) \label{eq:zv_inner2}
		\end{flalign}
	\end{subequations}
	for some distribution $P_{S_1S_2}P_{U|S_1S_2}P_{X_1|US_1S_2}P_{V|S_2}P_{X_2|VS_2}$
	$P_{Y_1|S_1X_1X_2}P_{Y_2|S_2X_2}$.
\end{proposition}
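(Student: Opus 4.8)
The plan is to establish the achievable region for the state-dependent Z-IC by constructing a Gel'fand--Pinsker binning scheme tailored to the asymmetric structure of the channel, and then to verify reliability via standard joint-typicality arguments. The key structural observation driving the codebook design is that receiver~2 is interference-free (since $b=0$) and is corrupted only by $S_2$, so the auxiliary variable $V$ need only be binned against $S_2$, reflecting the conditional distribution $P_{V|S_2}$ in the statement. In contrast, receiver~1 sees both $X_2$ (reintroducing $S_2$ when it cancels interference via $V$) and its own state $S_1$, so $U$ must be binned against the pair $(S_1,S_2)$, matching $P_{U|S_1S_2}$. First I would fix a joint distribution of the prescribed factored form and generate two independent binned codebooks: a $V$-codebook with roughly $2^{n(R_2 + \tilde{R}_2)}$ sequences organized into $2^{nR_2}$ bins, and a $U$-codebook with $2^{n(R_1+\tilde{R}_1)}$ sequences in $2^{nR_1}$ bins, where $\tilde{R}_1,\tilde{R}_2$ are the binning (fictitious) rates to be absorbed.

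Next I would describe encoding. Given messages $(w_1,w_2)$ and the observed state sequences, encoder~2 searches its $w_2$-th bin for a $v^n$ jointly typical with $s_2^n$, which succeeds with high probability provided $\tilde{R}_2 > I(V;S_2)$; it then forms $x_2^n$ from $(v^n,s_2^n)$. Encoder~1 searches its $w_1$-th bin for a $u^n$ jointly typical with $(s_1^n,s_2^n)$, succeeding when $\tilde{R}_1 > I(U;S_1S_2)$, and forms $x_1^n$ from $(u^n,s_1^n,s_2^n)$. For decoding, receiver~1 performs a two-step procedure: it first decodes $V$ (the interference-carrying auxiliary variable) and then, treating the recovered $v^n$ as side information, decodes $U$ and hence $w_1$. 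Receiver~2, being interference-free, directly decodes $V$ from $y_2^n$ to recover $w_2$.

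The core of the proof is the error-probability analysis, which I would split into encoding errors and decoding errors. After the binning-success conditions above are imposed, the message rates emerge from bounding the probability of a wrong codeword being jointly typical. Decoding $U$ at receiver~1 with $v^n$ available gives the constraint $R_1 + \tilde{R}_1 < I(U;VY_1)$, which combines with $\tilde{R}_1 > I(U;S_1S_2)$ to yield $R_1 < I(U;VY_1) - I(S_1S_2;U)$, exactly \eqref{eq:zv_inner1}. For $V$, the rate must support reliable decoding at \emph{both} receivers: decoding $V$ at receiver~1 requires $R_2 + \tilde{R}_2 < I(V;Y_1)$ and at receiver~2 requires $R_2 + \tilde{R}_2 < I(V;Y_2)$, so the binding constraint is the minimum; subtracting $\tilde{R}_2 > I(V;S_2)$ gives $R_2 < \min\{I(V;Y_2),I(V;Y_1)\} - I(S_2;V)$, which is \eqref{eq:zv_inner2}. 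I would invoke the packing and covering lemmas to make each joint-typicality bound rigorous and apply the union bound over the finitely many error events, letting $n\to\infty$.

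The main obstacle I anticipate is handling the statistical dependencies correctly in the successive-decoding step at receiver~1: because $U$ and $V$ are both functions of the shared state $S_2$, the decoded $v^n$ is correlated with the state information embedded in the intended $u^n$, and one must argue that this correlation is fully captured by conditioning on $V$ in the mutual-information term $I(U;VY_1)$ rather than introducing a hidden dependence that breaks the typicality bound. I would address this by carefully tracking the induced joint type of $(U,V,S_1,S_2,Y_1)$ under the prescribed distribution and verifying that the competing (wrong-bin) $U$-sequences are conditionally independent of $(v^n,y_1^n)$ given the correct $v^n$, so that the packing lemma applies cleanly. The remaining steps---the two binning-success arguments and the interference-free decoding at receiver~2---are routine once the codebook structure is fixed.
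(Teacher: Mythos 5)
Your proposal is correct and follows essentially the same route as the paper's proof: Gel'fand--Pinsker binning for $U$ against $(S_1,S_2)$ and for $V$, successive decoding of $V$ then $U$ at receiver~1, direct decoding of $V$ at receiver~2, and combining the covering constraints $\tilde{R}_1>I(U;S_1S_2)$, $\tilde{R}_2>I(V;S_2)$ with the packing constraints $R_1+\tilde{R}_1<I(U;VY_1)$ and $R_2+\tilde{R}_2<\min\{I(V;Y_1),I(V;Y_2)\}$. The only cosmetic difference is that the paper bins $v^n$ against both state sequences with $\tilde{R}_2>I(V;S_1S_2)$, which coincides with your $I(V;S_2)$ because the prescribed factorization $P_{V|S_2}$ makes $V$ conditionally independent of $S_1$ given $S_2$.
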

\begin{proof}
	See Appendix \ref{apx:Z inner}.	
\end{proof}
Following Proposition \ref{pps:Z_inner}, we further simplify the achievable region in the following corollary, which is in a useful form for us to characterize the capacity region for the Gaussian Z-IC. 
\begin{corollary}\label{cor:Z_inner}
	For the state-dependent Z-IC with the states noncausally known at both transmitters, if the following condition
	\vspace{-5mm}	
	\begin{flalign}
	I(V;Y_2)\leqslant I(V;Y_1)\label{zvcond}
	\vspace{-8mm}	
	\end{flalign} 
	is satisfied, then an achievable region consists of rate pairs $(R_1,R_2)$ satisfying:
	\begin{equation}\label{cor:cond1}
		\begin{aligned}
		R_1 \leqslant & I(U;VY_1) -I(S_1S_2;U) \\
		R_2 \leqslant & I(V;Y_2) -I(S_2;V) 
		\end{aligned}
	\end{equation}
	for some distribution $P_{S_1S_2}P_{U|S_1S_2}P_{X_1|US_1S_2}P_{V|S_2}P_{X_2|VS_2}$
	$P_{Y_1|S_1X_1X_2}P_{Y_2|S_2X_2}$.
\end{corollary}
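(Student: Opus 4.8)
The plan is to obtain this corollary as an immediate specialization of Proposition \ref{pps:Z_inner}, using the hypothesis \eqref{zvcond} only to resolve the minimum appearing in the bound on $R_2$. First I would invoke Proposition \ref{pps:Z_inner}, which already certifies that every rate pair satisfying \eqref{eq:zv_inner1} and \eqref{eq:zv_inner2} is achievable for any joint distribution of the stated product form $P_{S_1S_2}P_{U|S_1S_2}P_{X_1|US_1S_2}P_{V|S_2}P_{X_2|VS_2}P_{Y_1|S_1X_1X_2}P_{Y_2|S_2X_2}$. The region claimed in \eqref{cor:cond1} differs from that inner bound only in the second inequality, where the quantity $\min\{I(V;Y_2),I(V;Y_1)\}$ is replaced by $I(V;Y_2)$; the bound on $R_1$ and the admissible class of distributions are identical.

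The key observation is that the assumption \eqref{zvcond}, namely $I(V;Y_2)\leqslant I(V;Y_1)$, is precisely the statement that the first argument of the minimum is the smaller one, so that $\min\{I(V;Y_2),I(V;Y_1)\}=I(V;Y_2)$. Substituting this identity into \eqref{eq:zv_inner2} yields $R_2\leqslant I(V;Y_2)-I(S_2;V)$, which is exactly the second bound in \eqref{cor:cond1}, while \eqref{eq:zv_inner1} is carried over verbatim as the bound on $R_1$. I would phrase this so that the conclusion holds for each fixed choice of $(U,V,X_1,X_2)$ in the specified family for which \eqref{zvcond} is met: for any such choice the resulting rate pair lies in the region \eqref{cor:cond1}, and achievability of that pair is guaranteed by Proposition \ref{pps:Z_inner}.

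There is no genuine obstacle in this argument; the corollary is a pointwise restriction of the inner bound under a single scalar inequality, and the proof reduces to simplifying the minimum. The only point worth stating explicitly is that the hypothesis is imposed on the same fixed input distribution that generates the rate pair, so that the substitution $\min\{I(V;Y_2),I(V;Y_1)\}=I(V;Y_2)$ is valid term by term and the achievability carries through without any additional coding argument.
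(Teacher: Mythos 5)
Your proposal is correct and coincides with the paper's own treatment: the paper states Corollary \ref{cor:Z_inner} as a direct simplification of Proposition \ref{pps:Z_inner}, where condition \eqref{zvcond} simply resolves the minimum in \eqref{eq:zv_inner2} to $I(V;Y_2)$, exactly as you argue. Your explicit remark that \eqref{zvcond} is imposed on the same fixed input distribution that generates the rate pair is the right way to make this pointwise specialization precise.
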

In Corollary \ref{cor:Z_inner}, condition \eqref{zvcond} requires that receiver 1 is more capable in decoding $V$ (and hence $W_2$) than receiver 2, which is likely to be satisfied in the very strong interference regime. 

Following Corollary \ref{cor:Z_inner}, we characterize the channel parameters under which both the states and interference can be fully canceled, and hence the capacity region for the Z-IC is obtained. 

\begin{theorem}\label{thr:VSCapacity}
	For the state-dependent Gaussian Z-IC with states noncausally known at both transmitters, if the channel parameters  $(a,d,P_1,P_2,Q_1^\prime,Q_2)$ satisfy the following condition:
	\begin{equation} \frac{P_1+a^2P_2+d^2Q_2+Q_1^\prime+1}{(d+a\beta)^2Q_2P_2+(P_2+\beta^2Q_2)(P_1+Q_1^\prime+1])}\geqslant \frac{P_2+1}{P_2} \label{eq:verystrongcond}
	\end{equation}
	where $\beta=\frac{P_2}{P_2+1}$, then the capacity region is characterized by \eqref{cap:VeryStrong}.	
	\end {theorem}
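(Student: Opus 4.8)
The plan is to instantiate the achievable region of Corollary~\ref{cor:Z_inner} with a jointly Gaussian, dirty-paper-coded choice of $(U,V)$, to show that the two resulting rate bounds in \eqref{cor:cond1} collapse to the point-to-point rates of \eqref{cap:VeryStrong}, and then to verify that the hypothesis \eqref{zvcond} of that corollary is precisely the stated condition \eqref{eq:verystrongcond}. Since \eqref{cap:VeryStrong} is also a trivial outer bound (a genie that removes the interference $aX_2$ and reveals the states to each receiver reduces the $k$-th link to a point-to-point AWGN channel of power $P_k$ and unit noise, giving $R_k\leq\frac{1}{2}\log(1+P_k)$), establishing achievability of every such pair under the condition completes the characterization.

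First I would fix the Gaussian auxiliaries. For the interference-free receiver~2, whose channel is $Y_2=X_2+S_2+N_2$, I take $V=X_2+\beta S_2$ with the Costa coefficient $\beta=\frac{P_2}{P_2+1}$; the standard dirty-paper calculation \cite{Costa83} then gives $I(V;Y_2)-I(S_2;V)=\frac{1}{2}\log(1+P_2)$, so the $R_2$-bound in \eqref{cor:cond1} is exactly the point-to-point rate. For receiver~1, after it decodes $V$ and forms $Y_1-aV=X_1+\bigl((d-a\beta)S_2+S_1'\bigr)+N_1$, the residual state is the composite $W:=(d-a\beta)S_2+S_1'$, which transmitter~1 knows because it knows $(S_1,S_2)$. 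I therefore set $U=X_1+\frac{P_1}{P_1+1}W=X_1+\alpha_1 S_1'+\alpha_2 S_2$ with $\alpha_1=\frac{P_1}{P_1+1}$ and $\alpha_2=\frac{P_1}{P_1+1}(d-a\beta)$.

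The key observation for the $R_1$-bound is that $U$ depends on $(S_1,S_2)$ only through the scalar $W$, so $U$ is conditionally independent of $(S_1',S_2)$ given $W$ and hence $I(S_1S_2;U)=I(W;U)$; moreover, since $Y_1-aV$ is a deterministic function of $(Y_1,V)$, data processing gives $I(U;VY_1)\geq I(U;Y_1-aV)$. Applying Costa's result to the clean channel $X_1+W+N_1$ with the chosen $\alpha$ then yields $I(U;VY_1)-I(S_1S_2;U)\geq\frac{1}{2}\log(1+P_1)$, so the point $\bigl(\frac{1}{2}\log(1+P_1),\frac{1}{2}\log(1+P_2)\bigr)$, and hence the whole rectangle \eqref{cap:VeryStrong}, lies in the region of Corollary~\ref{cor:Z_inner}. (Only this inequality is needed, since the converse caps $R_1$ at the same value.)

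It remains to check the hypothesis \eqref{zvcond}. Because $V$ is common to both mutual informations, $I(V;Y_2)\le I(V;Y_1)$ is equivalent to $\frac{\Var(Y_2)}{\Var(Y_2\mid V)}\le\frac{\Var(Y_1)}{\Var(Y_1\mid V)}$, and I would evaluate each ratio through the Gaussian MMSE identity $\Var(Y\mid V)=\Var(Y)-\mathrm{Cov}(V,Y)^2/\Var(V)$ with $\Var(V)=P_2+\beta^2Q_2$. Substituting $\beta=\frac{P_2}{P_2+1}$ collapses the receiver-2 ratio to $\frac{(P_2+1)(P_2+\beta^2Q_2)}{P_2}$, while the receiver-1 ratio becomes $\frac{(P_1+a^2P_2+d^2Q_2+Q_1'+1)(P_2+\beta^2Q_2)}{(P_2+\beta^2Q_2)(P_1+Q_1'+1)+P_2Q_2(d-a\beta)^2}$; cancelling the common factor $(P_2+\beta^2Q_2)$ turns $I(V;Y_2)\le I(V;Y_1)$ into exactly \eqref{eq:verystrongcond}. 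I expect this last algebraic reduction — carefully expanding $\mathrm{Cov}(V,Y_1)^2$ so that its cross terms assemble into the single quadratic appearing in the denominator of \eqref{eq:verystrongcond} — to be the main obstacle, alongside the bookkeeping that confirms the joint term $I(U;VY_1)$ does not fall below the clean dirty-paper rate.
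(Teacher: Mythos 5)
Your proposal is correct and follows essentially the same route as the paper: the same Gaussian auxiliaries $V=X_2+\beta S_2$ and $U=X_1+\frac{P_1}{P_1+1}\bigl((d-a\beta)S_2+S_1'\bigr)$ plugged into Corollary~\ref{cor:Z_inner}, with the Costa coefficients collapsing both bounds in \eqref{cor:cond1} to the point-to-point rates and with \eqref{zvcond} supplying the parameter condition; your composite-state argument ($I(S_1S_2;U)=I(W;U)$ plus data processing) just makes explicit what the paper's proof leaves implicit. One remark: your reduction of \eqref{zvcond} yields $(d-a\beta)^2Q_2P_2$ in the denominator rather than the $(d+a\beta)^2Q_2P_2$ printed in \eqref{eq:verystrongcond}; since the paper's own proof writes $Y_1'=Y_1-aV=X_1+(d-a\beta)S_2+S_1'+N_1$ and $\alpha_1=\frac{P_1}{P_1+1}\bigl(d-\frac{aP_2}{P_2+1}\bigr)$, your sign is the correct one and the printed $+$ (like the stray bracket in that display) is a typo, so your claim of exact agreement with \eqref{eq:verystrongcond} holds only modulo that correction.
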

\begin{proof}
	Theorem \ref{thr:VSCapacity} follows from Corollary \ref{cor:Z_inner} by setting $U=X_1+\alpha_1S_2+\alpha_2S_1^\prime$, $V=X_2+\beta S_2$, where $X_1$, $X_2$, $ S_1^\prime$ and $ S_2 $ are independent Gaussian variables withe mean zero and variances $P_1$, $P_2$, $ Q_1^\prime $ and $ Q_2 $, respectively. As discussed in the proof of Proposition \ref{pps:Z_inner}, V is first decoded by decoder 1. And then by dirty paper coding, we design $ \alpha_1 $, $\alpha_2$ and $\beta$ for both $Y_2=X_2+S_2+N_2$ and $Y_1^{\prime}=Y_1-aV=X_1+(d-a\beta)S_2+S_1^\prime+N_1$ to fully cancel the states. Thus, the coefficients should satisfy the following conditions:
	\begin{subequations}  
		\begin{flalign}
		\frac{\alpha_1}{d-a\beta}&=\frac{P_1}{P_1+1}\label{eq:zdirty_con1}\\
		\alpha_2&=\frac{P_1}{P_1+1}\label{eq:zdirty_con2}\\
		\beta&=\frac{P_2}{P_2+1},\label{eq:zdirty_con3}
		\end{flalign}
	\end{subequations}
	which further yields
	$\alpha_1$, $\alpha_2$ and $\beta$ that satisfy
	\begin{flalign}
	\alpha_1&=\frac{P_1}{P_1+1}(d-\frac{aP_2}{P_2+1}), \quad \alpha_2=\frac{P_1}{P_1+1},\quad
	\beta=\frac{P_2}{P_2+1}.\ \ \ \ \  \nn
	\end{flalign}
	Substituting the above choice of the auxiliary random variables and the parameters into \eqref{zvcond} in Corollary \ref{cor:Z_inner}, we obtain the condition \eqref{eq:verystrongcond}. Substituting those choices into the condition \eqref{cor:cond1}, we obtain the capacity region characterized by \eqref{cap:VeryStrong}. Since such an achievable region achieves the point-to-point channel capacity for the Z-IC without the state, it can be shown to be the capacity region of the state-dependent Z-IC.
\end{proof}

	\vspace{-3mm}
	\begin{figure}[thb]
		\centering
		\includegraphics[height=2.4in,width=4in]{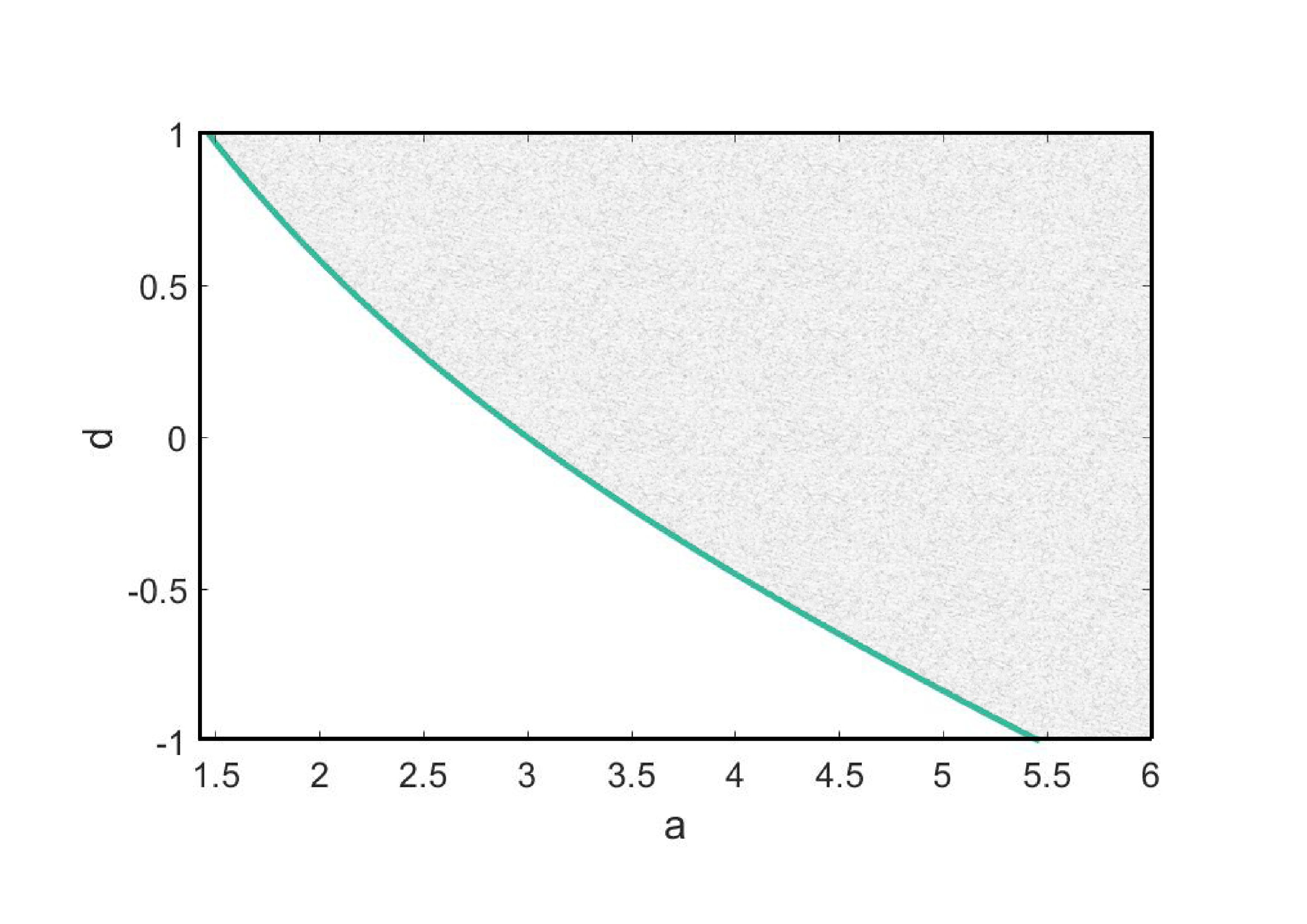}
		\caption{Characterization of channel parameters $(a,d)$ in shaded area under which the state-dependent Gaussian Z-IC achieves the capacity of the corresponding channel without states and interference in very strong interference  regime.}\label{fig:vsac1}
	\end{figure}
	
	Based on Theorem \ref{thr:VSCapacity}, if channel parameters satisfy the condition \eqref{eq:verystrongcond}, we can simultaneously cancel two states and the interference, and the point-to-point capacity of two receivers without state and interference can be achieved. The correlation between the two states captured by $d$ plays a very important role regarding whether the condition can be satisfied.
	In Fig.~\ref{fig:vsac1}, we set $P_1=2$, $P_2=2$, $Q_1=1$ and $Q_2=1$, and plot the range of the parameter pairs $(a,d)$ under which the channel capacity without states and interference can be achieved. These parameters fall in the shaded area above the line. It can be seen that as $d$ becomes larger (i.e., the correlation between the two states increases), the threshold on the parameter $a$ to fully cancel the interference and state becomes smaller. This suggests that more correlated states are easier to cancel together with the interference.
	\begin{figure}[thb]
		\centering
		\includegraphics[height=2.4in,width=4in]{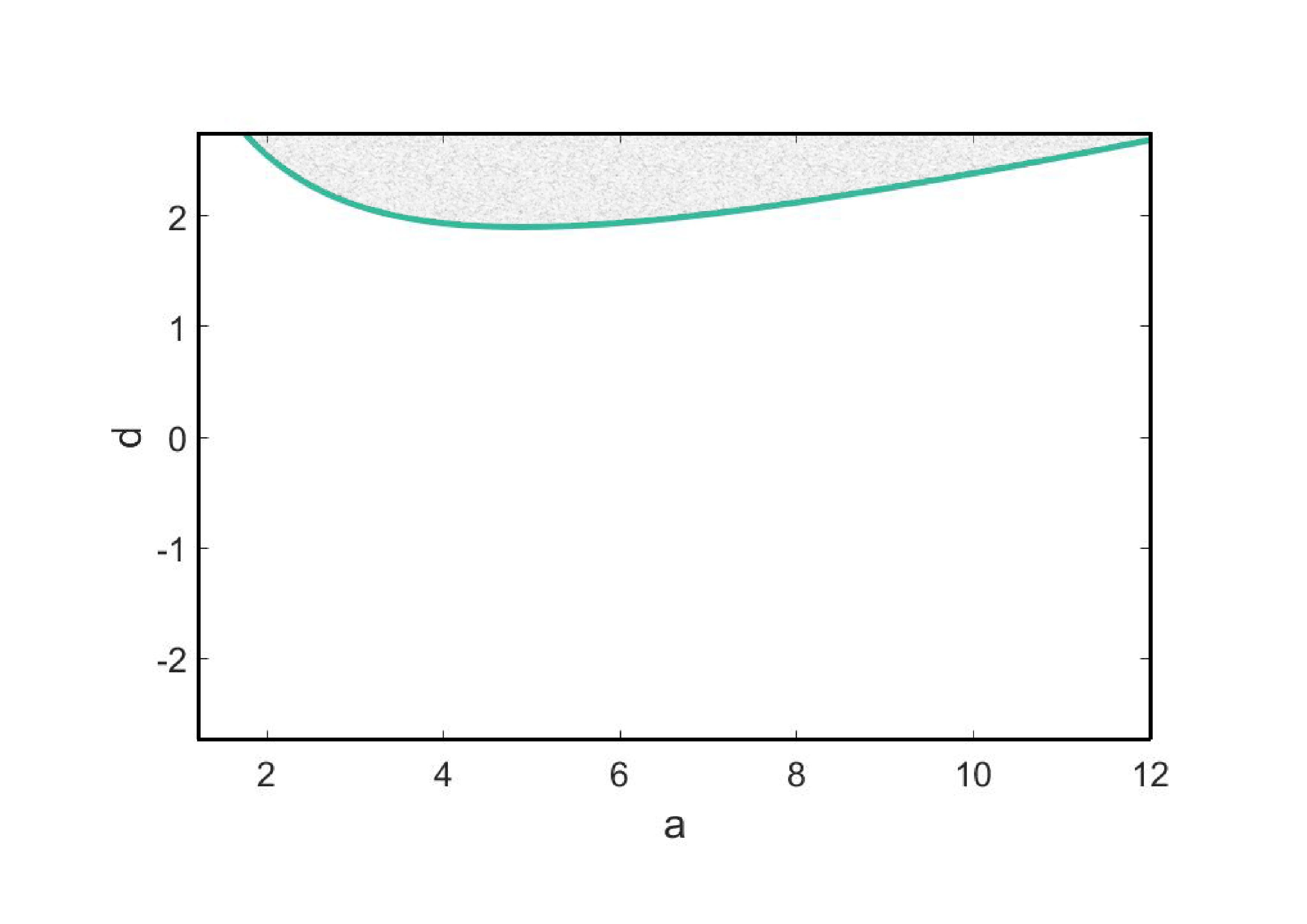}
		\vspace{-2mm}
		\caption{Characterization of channel parameters $(a,d)$ in shaded area under which the state-dependent Gaussian Z-IC achieves the capacity of the corresponding channel without states in very strong interference regime when $Q_2>\frac{1+P_2}{P_2}$.}\label{fig:vsac2}
	\end{figure}

	
	Fig.~\ref{fig:vsac1} agrees with the result of the very strong IC without states in the sense that once $a$ is above a certain threshold (i.e., the interference is strong enough), then the point-to-point channel capacity without interference can be achieved. However, this is not always true for the {\em state-dependent} Z-IC. This can be seen from the condition \eqref{eq:verystrongcond} in Theorem \ref{thr:VSCapacity}. If we let $a$ go to infinity, then the condition \eqref{eq:verystrongcond} becomes $Q_2>\frac{1+P_2}{P_2}$, which is not always satisfied. This is because in the presence of state, $Y_1$ decodes $V$ instead of $X_2$, and the decoding rate is largest if the dirty paper coding design of $V$ (based on $S_2$ at receiver 2) also happens to be the same dirty paper coding design against $S_2$ at receiver 1. Clearly, as $a$ gets too large, $V$ is more deviated from such a favorable design, and hence the decoding rate becomes smaller, which consequently hurts the achievability of the point-to-point capacity for receiver 2. Such a phenomena can be observed in Fig.~\ref{fig:vsac2}, where the constant $a$ cannot be too large to guarantee the achievability of the point-to-point channel capacity. Furthermore, the figure also suggests that further correlated states allow a larger range of $a$ under which the point-to-point channel capacity can be achieved. 	
	

	
	\section{Strong Interference Regime}
In this section, we study the state dependent IC in the strong interference regime, which excludes the very strong interference regime that has been studied in Section \ref{sec:vs_regime}. With the presence of states, these two regimes require separate treatments for state cancellation.

	\subsection{State-Dependent Regular IC}
	
	For the corresponding IC without state, if it is strong but not very strong, then the channel parameters satisfy
	\begin{flalign}
	&a\geqslant 1, \ \ \ \ b\geqslant 1,\\\nn
	&\min\{P_1+a^2P_2+1,b^2P_1+P_2+1\}\leqslant(1+P_1)(1+P_2).
	\end{flalign}
Without loss of generality, we assume that $P_1+a^2P_2+1 \leqslant b^2P_1+P_2+1$. It has been shown in \cite{Sato81} that the capacity region for the strong IC without states contains rate pair $(R_1,R_2)$ satisfying 				
	\begin{flalign}\nn
&	R_1 \leqslant \frac{1}{2}\log(1+P_1) ,\ \ \ \	R_2 \leqslant \frac{1}{2}\log(1+P_2),\\
	&R_1+R_2 \leqslant \frac{1}{2}\log(P_1+a^2P_2+1).\label{eq:strong_cap}
	\end{flalign}

	\begin{figure}[tbh]
	\vspace{-4mm}
	\centering
	\includegraphics[width=4.5in]{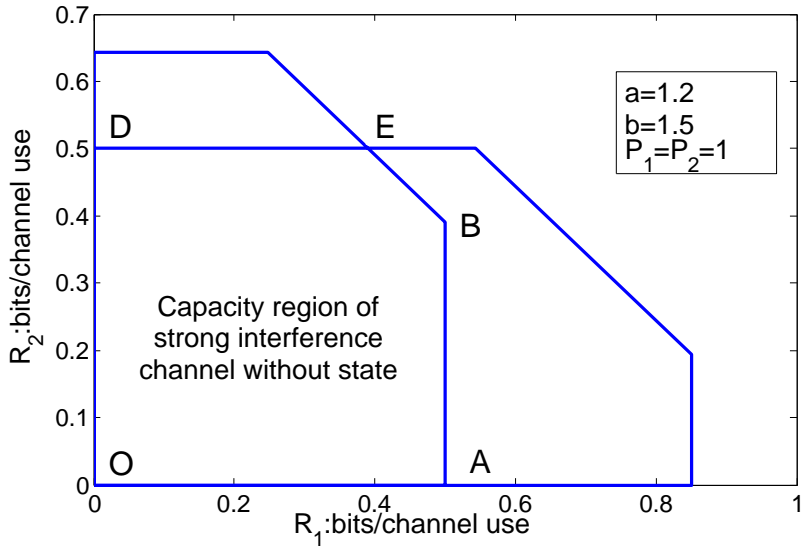}
	\vspace{-2mm}
	\caption{Capacity region of the strong IC without state}\label{fig:sreg}
	\vspace{-1mm}
\end{figure}	
 Such a region is an intersection of the capacity regions of two MACs, which is illustrated as the pentagon O-A-B-E-D-O in Fig. \ref{fig:sreg}. Our goal here is to study whether the points on the sum-rate capacity boundary of the Z-IC without state can be achieved with the presence of state. Such a problem has been studied in \cite{Duan16IT} for the channel with two receivers corrupted by the same but differently scaled state. Here, we generalize such a study to the situation when the two receivers are corrupted by two correlated states. 

Since every point on this line of the sum-rate capacity can be achieved by rate splitting and successive cancellation in the case without state, for the state-dependent channel, we continue to adopt the idea of rate splitting and successive cancellation but using auxiliary random variables to incorporate dirty paper coding to further cancel state successively. More specifically, transmitter 1 splits its message $W_1$ into $W_{11}$ and $W_{12}$, and then encodes them into $U_1$ and $U_2$ respectively based on the Gel'fand-Pinsker binning scheme. Then transmitter 2 encodes its message $W_2$ into \textit{V}, based on the Gel'fand-Pinsker binning scheme.  The auxiliary random variables $U_1$, $U_2$, and $V$ are designed such that decoding of them at receiver 1 successively fully cancels the state corruption of $Y_1$ so that the sum capacity boundary (i.e., the line B-E) can be achieved if only decoding at receiver 1 is considered. Now further incorporating the decoding at receiver 2, if for any point on the line B-E, decoding of $V$ at receiver 2 does not cause further rate constraints, then such a point is achievable for the state-dependent IC.  

	\begin{proposition}\label{pps:strong_inner}
	For the state-dependent IC with states noncausally known at both transmitters, an achievable region consists of rate pairs $(R_1,R_2)$ satisfying:
	\begin{equation}
		\begin{aligned}\label{rate:pps_stronginner}
		R_{1}&\leqslant \min\{I(U_1;Y_1), I(U_1;Y_2)\}\\
		&+\min\{I(U_2;VY_1|U_1), I(U_2;VY_2|U_1)\}-I(U_1U_2;S_1)\\
		R_{2}&\leqslant \min\{I(V;Y_1|U_1), I(V;Y_2|U_1)\}-I(V;S_1)
		\end{aligned}
	\end{equation}
	for some distribution $P_{S_1S_2}P_{V|S_1}P_{X_2|VS_1}P_{U_1|S_1}P_{U_2|S_1U_1}P_{X_1|S_1U_1U_2}P_{Y_1|S_1X_1X_2}P_{Y_2|S_2X_2}$, where $U_1$, $U_2$ and $V$ are auxiliary random variables.
\end{proposition}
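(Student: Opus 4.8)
The plan is to prove \eqref{rate:pps_stronginner} by a layered Gel'fand--Pinsker achievability argument that realizes the rate-splitting, successive-cancellation, and dirty-paper structure described just before the proposition, with joint typicality decoding carried out at \emph{both} receivers. First I would fix a joint distribution of the stated factored form and split transmitter~1's rate as $R_1=R_{11}+R_{12}$, with $W_{11}$ carried by the cloud center $U_1$, $W_{12}$ by the superimposed layer $U_2$, and $W_2$ by $V$. To each auxiliary codebook I attach a binning (covering) rate, say $\tilde R_{11}$, $\tilde R_{12}$, $\tilde R_2$, so that $U_1$ has $2^{n(R_{11}+\tilde R_{11})}$ codewords drawn i.i.d.\ from the induced marginal $P_{U_1}$, $U_2$ has $2^{n(R_{12}+\tilde R_{12})}$ codewords drawn from $P_{U_2|U_1}$ on top of each $U_1$ codeword, and $V$ has $2^{n(R_2+\tilde R_2)}$ codewords drawn from $P_V$; each codebook is then partitioned into equal-size bins indexed by the corresponding message.

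For the encoding step, both transmitters observe the realized state $S_1^n$ (they know both states). Transmitter~1 searches the bin indexed by $(w_{11},w_{12})$ for a pair $(U_1^n,U_2^n)$ jointly typical with $S_1^n$, and transmitter~2 searches the bin indexed by $w_2$ for a $V^n$ jointly typical with $S_1^n$; the inputs $X_1^n,X_2^n$ are then generated through $P_{X_1|U_1U_2S_1}$ and $P_{X_2|VS_1}$. By the covering lemma these searches succeed with high probability once $\tilde R_{11}>I(U_1;S_1)$, $\tilde R_{12}>I(U_2;S_1|U_1)$ and $\tilde R_2>I(V;S_1)$. The point I would stress here is that, because the fixed distribution makes $V$ conditionally independent of $(U_1,U_2)$ given $S_1$, the codewords chosen \emph{independently} at the two transmitters are nonetheless jointly typical with $S_1^n$; I would invoke the Markov lemma to conclude that $(U_1^n,U_2^n,V^n,S_1^n)$, and hence the induced $(X_1^n,X_2^n,S_1^n,S_2^n,Y_1^n,Y_2^n)$, is jointly typical with high probability.

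For the decoding step, each receiver decodes in the order $U_1$, then $V$, then $U_2$, using the previously decoded auxiliaries as side information together with joint typicality against its own output. At receiver~1 the packing lemma yields the constraints $R_{11}+\tilde R_{11}<I(U_1;Y_1)$ for the cloud center, $R_2+\tilde R_2<I(V;Y_1|U_1)$ for $V$, and $R_{12}+\tilde R_{12}<I(U_2;VY_1|U_1)$ for the superimposed layer; the last exponent is the \emph{joint} mutual information $I(U_2;VY_1|U_1)$ rather than $I(U_2;Y_1|U_1V)$ precisely because the competing $U_2$ codewords are generated independently of $V$, so the already-decoded $V$ enters as an extra observation that a wrong codeword must match. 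Repeating the identical analysis at receiver~2 reproduces these three bounds with $Y_1$ replaced by $Y_2$. I would then eliminate the covering rates using the chain rule $I(U_1;S_1)+I(U_2;S_1|U_1)=I(U_1U_2;S_1)$, take the minimum of the two receivers' bounds layer by layer, and Fourier--Motzkin eliminate the split through $R_1=R_{11}+R_{12}$ over nonnegative $R_{11},R_{12}$ to obtain exactly \eqref{rate:pps_stronginner}.

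I expect the main obstacle to be the interaction between the two independently binned codebooks through the shared state $S_1$. On the encoding side this forces the Markov-lemma argument to guarantee joint typicality of $(U_1^n,U_2^n,V^n)$ with $S_1^n$ even though the two transmitters do not coordinate their bin searches; on the decoding side it requires the packing bookkeeping to produce the joint exponent $I(U_2;VY_1|U_1)$ for the $U_2$ layer and, through the correlation of $V$ and $U_1$ with $S_1$, to make the binning rates cancel cleanly into the single state terms $I(U_1U_2;S_1)$ and $I(V;S_1)$ appearing in \eqref{rate:pps_stronginner}. Verifying that all these error events, together with the covering failures, vanish as $n\to\infty$ under the stated inequalities is the delicate part; the remaining manipulations are routine.
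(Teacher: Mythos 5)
Your proposal is correct and follows essentially the same route as the paper's proof: superposition codebooks for $U_1,U_2$ with an independent $V$ codebook, Gel'fand--Pinsker covering rates $I(U_1;S_1)$, $I(U_2;S_1|U_1)$, $I(V;S_1)$, sequential typicality decoding in the order $U_1 \to V \to U_2$ at both receivers yielding the same packing constraints (including the joint exponent $I(U_2;VY_1|U_1)$), and the final combination via $R_1=R_{11}+R_{12}$ and the chain rule $I(U_1;S_1)+I(U_2;S_1|U_1)=I(U_1U_2;S_1)$. Your explicit appeal to the Markov lemma for joint typicality of the independently selected codewords is a technicality the paper leaves implicit, but it does not change the argument.
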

\begin{proof}
	See Appendix \ref{apx:strong_inner}.
\end{proof}
\begin{remark}
	This scheme can be generalized through further splitting the messages and changing the orders of decoding the messages at the two receivers. The achievable region can then be obtained by taking the convex hull of the union over all achievable regions corresponding to different schemes above.
\end{remark}

Based on Proposition \ref{pps:strong_inner}, we next characterize partial boundary of the capacity region for the state-dependent Gaussian IC. For the sake of technical convenience, we express the Gaussian model in a different form. In particular, we express $S_2$ as $S_2=cS_1+S_2^\prime$ where $c$ is a constant representing the level of correlation, and $S_1$ is independent from $S_2^\prime$ with $S_2^\prime\sim \mathcal{N}(0, Q_2')$ where $Q_2=c^2Q_1+Q_2^\prime$. Thus, without loss of generality, the channel model can be expressed in the following equivalent form that is more convenient for analysis here.
\begin{subequations}
	\begin{flalign}
	Y_1&=X_1+ aX_2+S_1+N_1\\
	Y_2&=bX_1+X_2+cS_1+S_2^\prime+N_2.
	\end{flalign}
\end{subequations}
We next show that we can design a scheme to achieve the partial boundary of the capacity region for the IC without state. We note that the rate on the sum-capacity boundary can be characterized by
\begin{equation}\label{rate:capIC}
	\begin{aligned}
	&R_1=\frac{1}{2}\log\left(1+\frac{P_1'}{a^2P_2+P_1''+1}\right) + \frac{1}{2}\log\left(1+P_1''\right),\\
	&R_2=\frac{1}{2}\log\left(1+\frac{a^2P_2}{P_1''+1}\right),
\end{aligned}
\end{equation}
for some $P_1^{\prime}$, $P_1^{\prime\prime}\geqslant 0$, and $P_1^{\prime}+P_1^{\prime\prime}\leqslant P_1$.
\begin{theorem} \label{thr:SPointCapa}
	Any rate point in \eqref{rate:capIC} can be achieved by the state-dependent IC if the channel parameters satisfy the following conditions
	\begin{equation}\label{cond:thr_stronginner}
	 \begin{aligned}
	 I(U_1;Y_2)-I(U_1;S_1)&\leqslant\frac{1}{2}\log\left(1+\frac{P_1'}{a^2P_2+P_1''+1}\right)\\
	 I(U_2;VY_2|U_1)-I(U_2;S_1|U_1)&\leqslant \frac{1}{2}\log\left(1+P_1''\right)\\
	 I(V;Y_2|U_1)-I(V;S_1)&\leqslant \frac{1}{2}\log\left(1+\frac{a^2P_2}{P_1''+1}\right),
	 \end{aligned}
	 \end{equation}
	 where the mutual information terms are calculated by setting $U_1=X_1^\prime+\alpha_1S_1$, $U_2=X_1^{\prime\prime}+\alpha_2S_1$, $V=aX_2+\beta S_1$ and $X_1=X_1^\prime+X_1^{\prime\prime}$. Here $X_1^\prime$, $X_1^{\prime\prime}$ and $X_2$ are Gaussian variables with mean zero and variances $P_1^\prime$, $P_1^{\prime\prime}$ and $P_2$, and $\alpha_1$,$\alpha_2$ and $\beta$ are given by
	 \begin{equation} \label{eq:z_condition}
	 	\begin{aligned}
	 	\alpha_1&=\frac{P_1^\prime}{P_1+a^2P_2+1}, \ \ \ \alpha_2=\frac{P_1^{\prime\prime}}{P_1+a^2P_2+1},\\
	 	\beta&=\frac{a^2P_2}{P_1+a^2P_2+1}.
	 	\end{aligned}
	 \end{equation}
\end{theorem}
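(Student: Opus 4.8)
The plan is to verify that the particular Gaussian choice of auxiliary random variables stated in the theorem, when substituted into the three achievability conditions of Proposition \ref{pps:strong_inner}, yields exactly the rate point in \eqref{rate:capIC}, and that under the stated channel conditions the binding constraints come from receiver 1 (so that receiver 2 imposes no further penalty). First I would compute, for the chosen $U_1=X_1'+\alpha_1 S_1$, $U_2=X_1''+\alpha_2 S_1$, $V=aX_2+\beta S_1$, the three receiver-1 mutual information differences $I(U_1;Y_1)-I(U_1;S_1)$, $I(U_2;VY_1|U_1)-I(U_2;S_1|U_1)$, and $I(V;Y_1|U_1)-I(V;S_1)$, and show that the choice of $\alpha_1,\alpha_2,\beta$ in \eqref{eq:z_condition} makes these equal to the three summands appearing in \eqref{rate:capIC}, namely $\tfrac12\log(1+P_1'/(a^2P_2+P_1''+1))$, $\tfrac12\log(1+P_1'')$, and $\tfrac12\log(1+a^2P_2/(P_1''+1))$ respectively. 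The key observation driving the parameter choice is that $\alpha_1,\alpha_2,\beta$ are the Costa/dirty-paper MMSE coefficients tuned against the aggregate interference-plus-state seen at $Y_1=X_1'+X_1''+aX_2+S_1+N_1$; the common denominator $P_1+a^2P_2+1$ is precisely the total received power at receiver 1, which is the signature of optimal dirty-paper coefficients. With this, the state $S_1$ is successively and fully canceled in the receiver-1 decoding chain, reproducing the MAC corner-point rates of the no-state strong IC.

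The second ingredient is to argue that receiver 2 does not tighten any of the three bounds. Recall that each rate constraint in \eqref{rate:pps_stronginner} is a minimum over a receiver-1 term and a receiver-2 term; since I have matched the receiver-1 terms to the target rates in \eqref{rate:capIC}, I must ensure the receiver-2 terms are no smaller. This is exactly the content of the three inequalities \eqref{cond:thr_stronginner}: each one asserts that the corresponding receiver-2 quantity (e.g. $I(U_1;Y_2)-I(U_1;S_1)$) is at most the target rate, so that the minimum is achieved by the receiver-1 side and the point in \eqref{rate:capIC} survives. So the proof reduces to substituting the Gaussian variables into the left-hand sides of \eqref{cond:thr_stronginner}, computing $h(Y_2)$ and the relevant conditional/joint entropies at receiver 2 (using $Y_2=bX_1'+bX_1''+X_2+cS_1+S_2'+N_2$), and observing that the stated conditions are precisely what is needed.

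There is one point of care that I would flag. Proposition \ref{pps:strong_inner} is written with all auxiliary variables generated from $S_1$ alone (the distribution factorizes as $P_{V|S_1}P_{X_2|VS_1}P_{U_1|S_1}\cdots$), whereas the equivalent channel writes $S_2=cS_1+S_2'$ with $S_2'$ independent of $S_1$; I would check that the receiver-2 computations correctly treat $S_2'$ as fresh independent Gaussian noise of variance $Q_2'$ folded into the effective noise at $Y_2$, since the auxiliaries are designed only against $S_1$. This is what makes the correlation parameter $c$ enter the receiver-2 bounds \eqref{cond:thr_stronginner} while leaving the achieved rate point \eqref{rate:capIC} itself independent of $c$.

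The routine matching of the receiver-1 terms to \eqref{rate:capIC} is essentially the standard dirty-paper-coding verification and I expect it to go through mechanically once the coefficients \eqref{eq:z_condition} are plugged in. The main obstacle will be the receiver-2 entropy bookkeeping: the three left-hand sides of \eqref{cond:thr_stronginner} involve joint and conditional entropies of $(U_1,U_2,V)$ together with $Y_2$, and because $U_1,U_2,V$ share the common state component $S_1$ that also leaks into $Y_2$ through the coefficient $c$, these terms do not decouple cleanly; I would expect the bulk of the work, and the only genuinely delicate algebra, to lie in evaluating those receiver-2 mutual informations and confirming that the conditions \eqref{cond:thr_stronginner} are exactly the thresholds separating the regime where receiver 2 is non-binding from the regime where it is.
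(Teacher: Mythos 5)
Your overall route is the same as the paper's: substitute the stated Gaussian auxiliaries into Proposition \ref{pps:strong_inner}, verify that the successive dirty-paper design with the coefficients \eqref{eq:z_condition} makes the three receiver-1 quantities $I(U_1;Y_1)-I(U_1;S_1)$, $I(U_2;VY_1|U_1)-I(U_2;S_1|U_1)$, $I(V;Y_1|U_1)-I(V;S_1)$ equal to the three summands of \eqref{rate:capIC}, and then use \eqref{cond:thr_stronginner} to argue that the receiver-2 sides of the minima in \eqref{rate:pps_stronginner} do not bind. The first half of your plan is sound and matches the paper's calculation.

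The gap is in the second half, and it sits at the crux of the theorem: your two key sentences contradict each other. You correctly say that, having matched the receiver-1 terms to the target rates, you ``must ensure the receiver-2 terms are no smaller''; but you then claim \eqref{cond:thr_stronginner} delivers this because each receiver-2 quantity ``is at most the target rate, so that the minimum is achieved by the receiver-1 side.'' That implication is backwards. If $I(U_1;Y_2)-I(U_1;S_1)$ is at most the target, then the minimum in \eqref{rate:pps_stronginner} equals the \emph{receiver-2} quantity, which may be strictly below the target, and achievability of the point \eqref{rate:capIC} does not follow; as written, your step proves only that the achievable rates are bounded by quantities that are themselves at most the targets, which is the wrong direction. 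What the argument actually requires is the reverse inequality in all three conditions: each receiver-2 quantity must be \emph{at least} the corresponding summand of \eqref{rate:capIC}, so that every minimum is attained by the receiver-1 term. This is also the direction used in the analogous Z-IC result, where condition \eqref{eq:zscond}, i.e.\ $I(V;U_1Y_1)\leqslant I(V;Y_2)$, and its Gaussian form \eqref{eq:ZSCond} with right-hand side $1+\frac{a^2P_2}{P_1''+1}$, demand that receiver 2's decoding capability be no worse than the receiver-1 benchmark. The inequalities in \eqref{cond:thr_stronginner} as printed appear to have a reversed direction (a typo in the theorem statement, which the paper's own proof glosses over when it asserts achievability); a correct proof must detect and resolve this, rather than paper over it with the invalid inference that ``$\leqslant$'' makes the receiver-1 side active. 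With the direction corrected, the remainder of your plan (the receiver-2 entropy bookkeeping with $S_2'$ treated as independent noise of variance $Q_2'$, so that the correlation $c$ enters only the conditions and not the achieved point) is exactly the paper's argument.
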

\begin{proof}
	Theorem \ref{thr:SPointCapa} follows from Proposition \ref{pps:strong_inner} by choosing the auxiliary random variables $U_1$, $U_2$ and $V$ as in the statement of the theorem. In particular, $U_1$ is first decoded by receiver 1, and is designed to cancel the state in $Y_1$ treating all other variables as noise. Then, $V$ is decoded by receiver 1, and is designed to cancel the state in $Y_1^{\prime}=Y_1-U_1=X_1''+aX_2+(c-\alpha_1)S_1+N_1$. Finally, $U_2$ is designed to cancel the state in $Y_1''=Y_1'-V=X_1''+(c-\alpha_1-\beta)S_1+N_1$. In order to satisfy the state cancellation requirements, $\alpha_1$, $\alpha_2$ and $\beta$ should satisfy
	\begin{flalign}
	&\alpha_1=\frac{P_1'}{P_1+a^2P_2+1}, \label{eq:sdirty_con1}\quad \quad\\
	&				\frac{\alpha_2}{1-\alpha_1}=\frac{P_1^{\prime\prime}}{P_1^{\prime\prime}+1}, \label{eq:sdirty_con2}\\
	&				\frac{\beta}{1-\alpha_1}=\frac{a^2P_2}{P_1^{\prime\prime}+a^2P_2+1},\label{eq:sdirty_con3}
	\end{flalign}
	which yields \eqref{eq:z_condition}. Substituting these choices of the random variables and the coefficients into Proposition \ref{pps:strong_inner},  \eqref{rate:pps_stronginner} becomes
	
	\begin{equation}
	\begin{aligned}
	R_{1}&\leqslant \min\left\{ I(U_1;Y_2)-I(U_1;S_1),\frac{1}{2}\log\left(1+\frac{P_1'}{a^2P_2+P_1''+1}\right)\right\}\\
	&+\min\left\{I(U_2;VY_2|U_1)-I(U_2;S_1|U_1), \frac{1}{2}\log\left(1+P_1''\right)\right\}\\
	R_{2}&\leqslant \min\left\{I(V;Y_2|U_1)-I(V;S_1), \frac{1}{2}\log\left(1+\frac{a^2P_2}{P_1''+1}\right)\right\}.
	\end{aligned}
	\end{equation}
Hence, if the condition \eqref{cond:thr_stronginner} is satisfied, the points on the sum capacity boundary \eqref{rate:capIC} can be achieved.
\end{proof} 
\subsection{State-Dependent Z-IC}

In this subsection, we study the state-dependent Z-IC in the strong, but not very strong interference regime, in which the channel parameters satisfy  $1 \leqslant a^2 < 1+P_1$. For the corresponding Z-IC without states, it has been shown that the capacity region contains rate pairs $(R_1,R_2)$ satisfying 
\begin{flalign}
&R_1 +R_2 \leqslant \frac{1}{2} \log(1+P_1 +a^2P_2)\nn\\
&R_1 \leqslant \frac{1}{2} \log{ (1+P_1)},\;\;\;R_2 \leqslant \frac{1}{2} \log{ (1+P_2)} \label{eq:StrongAchivGaussian}
\end{flalign}
which is illustrated as the pentagon O-A-B-E-F in Fig.~\ref{fig:scapa}.
\begin{figure}[thb]
	\vspace{-2mm}
	\centering
	\includegraphics[width=4.5in]{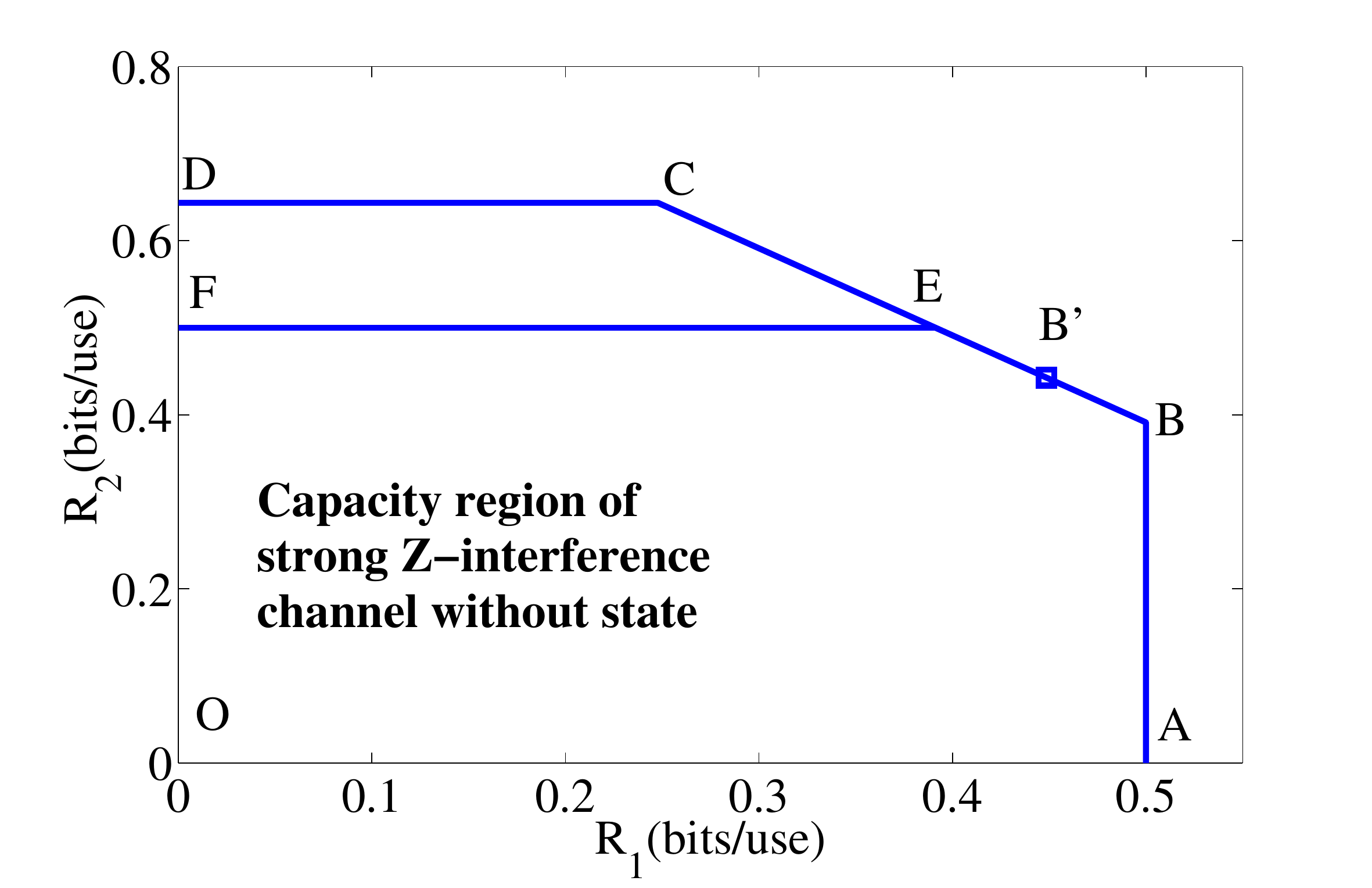}
	\vspace{-2mm}
	\caption{Capacity region of the strong Z-IC without state}\label{fig:scapa}
	\vspace{-1mm}
\end{figure}

Our goal here is to study whether the points on the sum-capacity boundary of the Z-IC {\em without} state (i.e., the line B-E in Fig.~\ref{fig:scapa}) can be achieved for the corresponding {\em state-dependent} Z-IC. We follow the idea for the state-dependent IC to design an achievable scheme, and obtain the following Proposition.
\begin{proposition}\label{pps:strong_Z_inner}
	For the state-dependent Z-IC with states noncausally known at both transmitters, if the following condition is satisfied
	\begin{flalign}
	I(V;U_1Y_1)\leqslant I(V;Y_2),\label{eq:zscond}
	\end{flalign} 
	then an achievable region consists of rate pairs $(R_1,R_2)$ satisfying:
	\begin{subequations}
		\begin{flalign}
		R_1 \leqslant & I(U_1;Y_1)+I(U_2;VY_1|U_1) -I(S_1;U_1U_2)\label{zsas1}\\ 
		R_2 \leqslant &   I(V;U_1Y_1) -I(S_1;V) \label{zsas2}
		\end{flalign}
	\end{subequations}
	for some distribution $P_{S_1S_2}P_{V|S_1}P_{X_2|VS_1}P_{U_1|S_1}P_{U_2|S_1U_1}$ $P_{X_1|S_1U_1U_2}P_{Y_1|S_1X_1X_2}P_{Y_2|S_2X_2}$.
\end{proposition}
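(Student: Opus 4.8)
The plan is to prove achievability by a Gel'fand--Pinsker binning scheme combined with rate splitting at transmitter~1, superposition coding, and successive joint-typicality decoding. The guiding principle is that every auxiliary codeword is binned against the state $S_1$ that corrupts $Y_1$, since the design target in this regime is to cancel $S_1$ at receiver~1; receiver~2, being interference-free, only needs to recover $W_2$ through $V$. The hypothesis \eqref{eq:zscond} will be used at the very end to decide which receiver dominates the decoding of $V$.

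First I would generate the codebooks. Split $W_1=(W_{11},W_{12})$. Draw $2^{n(R_{11}+R_{11}')}$ sequences $u_1^n$ i.i.d.\ from the marginal $P_{U_1}$ and sort them into $2^{nR_{11}}$ bins; for each $u_1^n$ draw $2^{n(R_{12}+R_{12}')}$ conditional sequences $u_2^n\sim P_{U_2|U_1}$ sorted into $2^{nR_{12}}$ bins; and draw $2^{n(R_2+R_2')}$ sequences $v^n\sim P_V$ sorted into $2^{nR_2}$ bins. For the encoding, given $s_1^n$ transmitter~2 searches bin $W_2$ for a $v^n$ jointly typical with $s_1^n$, which succeeds with high probability once $R_2'\geq I(V;S_1)$, and transmits $x_2^n\sim P_{X_2|VS_1}$; transmitter~1 finds $u_1^n$ in bin $W_{11}$ jointly typical with $s_1^n$ (feasible if $R_{11}'\geq I(U_1;S_1)$) and then $u_2^n$ in bin $W_{12}$ jointly typical with $(s_1^n,u_1^n)$ (feasible if $R_{12}'\geq I(U_2;S_1|U_1)$), and transmits $x_1^n\sim P_{X_1|S_1U_1U_2}$.

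Next I would carry out the decoding and collect the rate constraints. Receiver~1 decodes successively: it recovers $u_1^n$ from $y_1^n$, requiring $R_{11}+R_{11}'\leq I(U_1;Y_1)$; then $v^n$ from $(u_1^n,y_1^n)$, requiring $R_2+R_2'\leq I(V;U_1Y_1)$; then $u_2^n$ from $(u_1^n,v^n,y_1^n)$, requiring $R_{12}+R_{12}'\leq I(U_2;VY_1|U_1)$. Receiver~2 recovers $v^n$ from $y_2^n$, requiring $R_2+R_2'\leq I(V;Y_2)$. Setting each binning rate to its minimal feasible value and invoking the chain rule $I(U_1;S_1)+I(U_2;S_1|U_1)=I(S_1;U_1U_2)$ gives $R_1=R_{11}+R_{12}\leq I(U_1;Y_1)+I(U_2;VY_1|U_1)-I(S_1;U_1U_2)$, which is \eqref{zsas1}. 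Combining the two constraints on $V$ yields $R_2\leq\min\{I(V;U_1Y_1),I(V;Y_2)\}-I(V;S_1)$, and the hypothesis \eqref{eq:zscond} makes receiver~1 the binding constraint, producing \eqref{zsas2}.

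The main obstacle is not the single-link error analysis, which is routine, but the fact that the two transmitters bin against the \emph{same} state $S_1$ using \emph{independently} generated codebooks. One must verify that the jointly selected tuple $(s_1^n,v^n,u_1^n,u_2^n)$ is jointly typical under the prescribed factorization, so that the induced channel inputs and outputs are jointly typical as well. This is exactly where the structure $P_{V|S_1}P_{U_1|S_1}P_{U_2|S_1U_1}$ matters: $V$ and $(U_1,U_2)$ are conditionally independent given $S_1$, matching the independence of the two binning operations, and the claim is made rigorous by the conditional (Markov-lemma) typicality argument. Comparable care is needed to control error propagation across the three successive decoding stages at receiver~1 via the union bound.
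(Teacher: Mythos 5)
Your proposal is correct and takes essentially the same route as the paper: the paper proves this proposition by pointing to the proof of Proposition \ref{pps:strong_inner} (rate splitting at transmitter 1, Gel'fand--Pinsker binning of $U_1$, $U_2$, $V$ against $S_1$, successive decoding of $U_1$, then $V$, then $U_2$ at receiver 1) and simply dropping receiver 2's decoding of $U_1,U_2$, so that receiver 2 decodes $V$ alone under the constraint $I(V;Y_2)$ and condition \eqref{eq:zscond} resolves the minimum to $I(V;U_1Y_1)$, exactly as in your argument. Your closing observation about needing a Markov-lemma-type argument to justify joint typicality of independently binned codewords selected against the same state $S_1$ is a genuine technical point that the paper leaves implicit ("it is easy to show"), but it is a refinement of, not a departure from, the same scheme.
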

The proof of Proposition \ref{pps:strong_Z_inner} is similar to the proof of Proposition \ref{pps:strong_inner}. The only difference lies in that $Y_2$ does not need to decode $U_1$ and $U_2$ because of no interference at receiver 2.

For the Gaussian model,  based on Proposition \ref{pps:strong_Z_inner}, we characterize the condition under which any point on the sum capacity boundary of the strong Z-IC without states (e.g., point $B'$ in Fig. \ref{fig:scapa}) is achievable. Hence, such a point is on the sum capacity boundary of the state-dependent Z-IC.

	\begin{theorem}\label{thr:ZSPointCapa}
		For the state-dependent Gaussian Z-IC with state noncausally known at both transmitters, if the channel parameters  $(a,c,P_1,P_2,Q_1,Q_2^\prime)$ satisfy the following condition:
		\begin{flalign}
		&\frac{a^2P_2(P_2+c^2Q_1+Q_2^\prime+1)}{(ac-\beta)^2Q_1P_2+(a^2P_2+\beta^2Q_1)(Q_2^\prime+1)}  \geqslant 1+\frac{a^2P_2}{P_1^{\prime\prime}+1} \label{eq:ZSCond}
		\end{flalign}
		where $\beta=\frac{a^2P_2}{P_1+a^2P_2+1}$, then the following point (on the line B-E)
		\begin{flalign}
		&R_1=\frac{1}{2}\log\left(1+\frac{P_1'}{a^2P_2+P_1''+1}\right) + \frac{1}{2}\log\left(1+P_1''\right)\nn\\
		&R_2=\frac{1}{2}\log\left(1+\frac{a^2P_2}{P_1''+1}\right) \label{eq:InnerPointsStrong}
		\end{flalign}
		where $P_1'=P_1-P_1''$, is on the sum capacity boundary. 
		\end {theorem}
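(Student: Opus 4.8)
The plan is to specialize the discrete-memoryless achievable region of Proposition~\ref{pps:strong_Z_inner} to the Gaussian model using the Costa-style auxiliary variables $U_1 = X_1' + \alpha_1 S_1$, $U_2 = X_1'' + \alpha_2 S_1$ and $V = aX_2 + \beta S_1$ prescribed in the statement, and then to verify two separate things: (i) the two right-hand sides of that region evaluate exactly to the target point \eqref{eq:InnerPointsStrong}, and (ii) the side condition \eqref{eq:zscond} required by Proposition~\ref{pps:strong_Z_inner} is precisely \eqref{eq:ZSCond}. Because the $Y_1$ equation is identical in the IC and Z-IC models and the auxiliaries and coefficients coincide with those of Theorem~\ref{thr:SPointCapa}, the receiver-1 computations carry over essentially verbatim; the genuinely new content is the receiver-2 side condition.

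First I would reproduce the successive dirty-paper-coding calculation at receiver 1. Decoding $U_1$, then $V$, then $U_2$, and using the coefficients \eqref{eq:z_condition} (equivalently the MMSE/DPC relations \eqref{eq:sdirty_con1}--\eqref{eq:sdirty_con3}), the state $S_1$ is cancelled layer by layer, so that $I(U_1;Y_1)-I(U_1;S_1)=\tfrac12\log(1+\tfrac{P_1'}{a^2P_2+P_1''+1})$, $I(U_2;VY_1|U_1)-I(U_2;S_1|U_1)=\tfrac12\log(1+P_1'')$, and $I(V;U_1Y_1)-I(V;S_1)=\tfrac12\log(1+\tfrac{a^2P_2}{P_1''+1})$. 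Applying the chain rule $I(S_1;U_1U_2)=I(U_1;S_1)+I(U_2;S_1|U_1)$ then collapses the $R_1$ bound \eqref{zsas1} to $\tfrac12\log(1+\tfrac{P_1'}{a^2P_2+P_1''+1})+\tfrac12\log(1+P_1'')$ and the $R_2$ bound \eqref{zsas2} to $\tfrac12\log(1+\tfrac{a^2P_2}{P_1''+1})$, which are exactly the coordinates in \eqref{eq:InnerPointsStrong}. A useful consistency check is that these two coordinates sum to $\tfrac12\log(1+P_1+a^2P_2)$, i.e. the sum rate on the line B-E.

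Next I would treat the side condition. Using the identity just obtained, $I(V;U_1Y_1)=\tfrac12\log(1+\tfrac{a^2P_2}{P_1''+1})+I(V;S_1)$, with $I(V;S_1)=\tfrac12\log\tfrac{a^2P_2+\beta^2Q_1}{a^2P_2}$ computed directly from $V=aX_2+\beta S_1$. On the other side, $I(V;Y_2)$ is the mutual information between the two jointly Gaussian scalars $V=aX_2+\beta S_1$ and $Y_2=X_2+cS_1+S_2'+N_2$; writing it as $\tfrac12\log\tfrac{\Var(V)\Var(Y_2)}{\Var(V)\Var(Y_2)-\mathrm{Cov}(V,Y_2)^2}$ and expanding the determinant, the $S_1$-cross terms regroup into the perfect square $(ac-\beta)^2Q_1P_2$, leaving denominator $(ac-\beta)^2Q_1P_2+(a^2P_2+\beta^2Q_1)(Q_2'+1)$. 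Substituting both expressions into \eqref{eq:zscond}, the common factor $(a^2P_2+\beta^2Q_1)$ cancels and, after clearing $a^2P_2$, the inequality becomes exactly \eqref{eq:ZSCond}. Hence \eqref{eq:ZSCond} is equivalent to \eqref{eq:zscond}, Proposition~\ref{pps:strong_Z_inner} applies, and \eqref{eq:InnerPointsStrong} is achievable. Since \eqref{eq:InnerPointsStrong} lies on the sum boundary $R_1+R_2=\tfrac12\log(1+P_1+a^2P_2)$ of the no-state Z-IC \eqref{eq:StrongAchivGaussian}, and that boundary is an outer bound for the state-dependent channel (giving each receiver its additive state as genie side information reduces the model to the no-state Z-IC without lowering capacity), the point lies on the sum-capacity boundary of the state-dependent Z-IC.

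I expect the main obstacle to be the algebraic reduction in the third paragraph: evaluating $I(V;Y_2)$ and confirming that the determinant factors so that the denominator matches the $(ac-\beta)^2Q_1P_2+\cdots$ form of \eqref{eq:ZSCond}, together with checking that the $\beta^2Q_1$ contributions cancel cleanly against $I(V;S_1)$. A secondary point requiring care is justifying $I(V;U_1Y_1)-I(V;S_1)=\tfrac12\log(1+\tfrac{a^2P_2}{P_1''+1})$ for the \emph{joint} term $I(V;U_1Y_1)$ appearing in the Z-IC region, rather than the conditional $I(V;Y_1|U_1)$ used in the IC analysis; this hinges on $U_1$ being already decoded at receiver 1 and on the DPC relation \eqref{eq:sdirty_con3}.
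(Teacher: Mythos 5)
Your proposal is correct and follows essentially the same route as the paper: specialize Proposition~\ref{pps:strong_Z_inner} to the Gaussian auxiliaries and coefficients of Theorem~\ref{thr:SPointCapa}, show the resulting bounds collapse to \eqref{eq:InnerPointsStrong}, and show that substituting $V=aX_2+\beta S_1$ into \eqref{eq:zscond} yields exactly \eqref{eq:ZSCond}. The paper states these steps without computation, whereas you additionally carry out the determinant algebra, pin down the one genuinely delicate point (that $I(V;U_1Y_1)=I(V;Y_1-U_1)$, i.e.\ the Markov relation enforced by the DPC choice \eqref{eq:sdirty_con3}, so that \eqref{eq:ZSCond} is not weaker than \eqref{eq:zscond}), and make explicit the genie-aided outer-bound argument that the paper leaves implicit.
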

\begin{proof}
Theorem \ref{thr:ZSPointCapa} follows from Proposition \ref{pps:strong_Z_inner} by choosing the auxiliary random variables $U_1$, $U_2$ and $V$ based on the setting in Theorem \ref{thr:SPointCapa}. Thus, the state at $Y_1$ can be fully canceled. Furthermore, by substituting the auxiliary random variables in Theorem \ref{thr:SPointCapa} into \eqref{eq:zscond}, the condition \eqref{eq:ZSCond} can be obtained, under which the points characterized in \eqref{eq:InnerPointsStrong} can be achieved. 
 \end{proof}

		Theorem \ref{thr:ZSPointCapa} provides the condition of channel parameters under which a certain given point is on the sum capacity boundary of the capacity region. We next characterize a line segment on the sum capacity boundary for a given set of channel parameters.

	\begin{corollary}\label{cor:capacitySegment}
		For the state-dependent Z-IC with state noncausally known at both transmitters, if a point on the line $B-E$ in Fig.~\ref{fig:scapa} is on the sum-capacity boundary for a given set of channel parameters, then the segment between this point and point $B$ on the line $B-E$ is on the sum capacity boundary for the same set of channel parameters.
	\end{corollary}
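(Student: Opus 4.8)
The plan is to prove Corollary~\ref{cor:capacitySegment} by exploiting the structure of condition \eqref{eq:ZSCond} in Theorem~\ref{thr:ZSPointCapa} as $P_1''$ varies. Observe first that the points on the line $B$--$E$ in \eqref{eq:InnerPointsStrong} are parametrized by $P_1''\in[0,P_1]$ with $P_1'=P_1-P_1''$; as $P_1''$ decreases from $P_1$ toward $0$, the rate pair moves monotonically along the segment from point $B$ toward point $E$. (At the endpoint $P_1''=P_1$ we have $P_1'=0$, so $R_1=\frac12\log(1+P_1)$ and this is point~$B$; decreasing $P_1''$ trades $R_1$ for $R_2$ along the sum-rate line.) I would first make this parametrization precise and confirm the direction of travel, since the whole argument hinges on which endpoint corresponds to $B$.

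The key observation is that a given point on $B$--$E$ lies on the sum-capacity boundary precisely when condition \eqref{eq:ZSCond} holds for its corresponding value of $P_1''$. So I would examine how the two sides of \eqref{eq:ZSCond} behave as $P_1''$ increases. The left-hand side
\[
\frac{a^2P_2(P_2+c^2Q_1+Q_2^\prime+1)}{(ac-\beta)^2Q_1P_2+(a^2P_2+\beta^2Q_1)(Q_2^\prime+1)}
\]
depends on $P_1''$ only through $\beta=\frac{a^2P_2}{P_1+a^2P_2+1}$, which is in fact a \emph{fixed} constant independent of $P_1''$ (it depends only on $P_1$, $P_2$, $a$); hence the left-hand side is constant in $P_1''$. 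The right-hand side $1+\frac{a^2P_2}{P_1''+1}$ is strictly \emph{decreasing} in $P_1''$. Therefore, if \eqref{eq:ZSCond} holds for some value $P_1''=p^\ast$ (the given point on the boundary), then it automatically holds for every $P_1''\geqslant p^\ast$, because increasing $P_1''$ only weakens the requirement by shrinking the right-hand side. This is the crux of the argument.

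Combining the two pieces: since point $B$ corresponds to $P_1''=P_1$, the largest value in the range, and the given boundary point corresponds to some $p^\ast\leqslant P_1$, the condition \eqref{eq:ZSCond} is satisfied for every $P_1''\in[p^\ast,P_1]$. By Theorem~\ref{thr:ZSPointCapa} each such $P_1''$ yields a point in \eqref{eq:InnerPointsStrong} that is on the sum-capacity boundary, and by the monotone parametrization these points sweep out exactly the segment joining the given point to point~$B$. I would close by noting that every point on this segment achieves the same sum rate $\frac12\log(1+P_1+a^2P_2)$ (the points all lie on the line $B$--$E$), so establishing achievability of each point is equivalent to showing it lies on the sum-capacity boundary.

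The main obstacle I anticipate is verifying carefully that $\beta$, and hence the entire left-hand side of \eqref{eq:ZSCond}, truly does not depend on $P_1''$ while the right-hand side does; this is what makes the one-sided monotonicity work cleanly. I would also want to double-check the orientation of the parametrization, i.e.\ that point $B$ genuinely corresponds to the endpoint $P_1''=P_1$ that maximizes the right-hand-side threshold, since the direction of the monotonicity argument depends on the given point sitting at a value of $P_1''$ no larger than that of~$B$. If either of these structural facts failed, the segment would extend toward $E$ rather than toward $B$, so the sign of the monotonicity is the one genuinely delicate point rather than any hard estimate.
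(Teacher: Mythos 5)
The paper states Corollary~\ref{cor:capacitySegment} without an explicit proof, as an immediate consequence of Theorem~\ref{thr:ZSPointCapa}, and your monotonicity argument is exactly the intended one: the left side of \eqref{eq:ZSCond} depends on $P_1''$ only through $\beta=\frac{a^2P_2}{P_1+a^2P_2+1}$, which is a constant of the channel, while the right side $1+\frac{a^2P_2}{P_1''+1}$ strictly decreases in $P_1''$, so validity of \eqref{eq:ZSCond} at $p^\ast$ propagates to all $P_1''\in[p^\ast,P_1]$, i.e.\ to the whole subsegment ending at $B$ (which indeed sits at $P_1''=P_1$, $R_1=\frac12\log(1+P_1)$). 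Two small points of precision. First, the segment $B$--$E$ is parametrized by $P_1''\in[a^2-1,P_1]$, not $[0,P_1]$: for $P_1''<a^2-1$ the point in \eqref{eq:InnerPointsStrong} has $R_2>\frac12\log(1+P_2)$ and lies beyond $E$, outside the pentagon; this does not affect your argument since you only move from $p^\ast$ upward. Second, your phrase ``precisely when'' overstates Theorem~\ref{thr:ZSPointCapa}, which gives \eqref{eq:ZSCond} only as a \emph{sufficient} condition; read literally, the corollary's hypothesis (``is on the sum-capacity boundary'') could hold at a point where \eqref{eq:ZSCond} fails, and then neither your argument nor the paper's implicit one applies. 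The corollary is evidently intended with the hypothesis ``the point is certified achievable via Theorem~\ref{thr:ZSPointCapa},'' under which your proof is complete and matches the paper's intent.
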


\vspace{-2mm}		
		\begin{figure}[thb]
			\centering
			\includegraphics[width=3in]{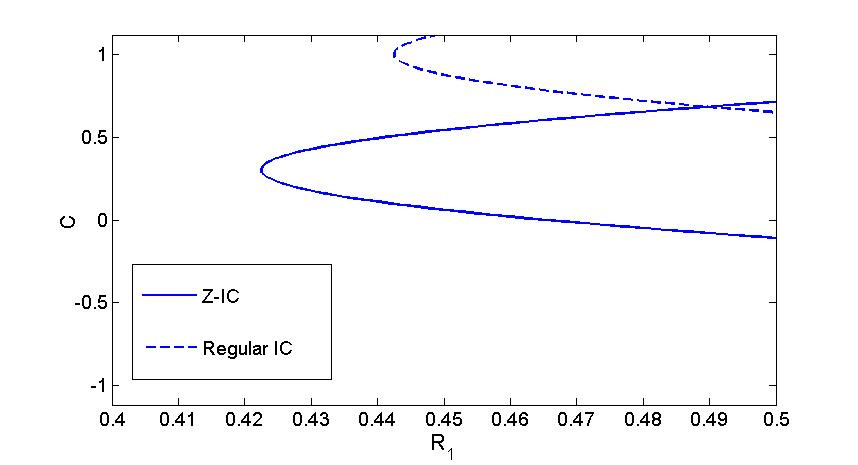}
			\vspace{-2mm}
			\caption{Ranges of $c$ under which points on sum capacity boundary of the strong IC and Z-IC without state can be achieved by the state-dependent IC and Z-IC.}\label{fig:sac}
		\end{figure}


In order to numerically illustrate Theorem \ref{thr:ZSPointCapa}, we first note that each point on the sum-capacity boundary (i.e., the line B-E in  Fig. \ref{fig:scapa}) can be expressed as $(R_1,R_2)=(R_1, \frac{1}{2}\log(P_1+a^2P_2+1)-R_1)$. We now set $P_1=2$, $P_2=0.7$, $Q_1=0.4$, $Q_2=0.5$ and $a=1.2$, and hence $R_1 \in [ \frac{1}{2}\log(1.23), 0.5]$ parameterizes all points from point E to point B in Fig. \ref{fig:scapa}. In Fig.~\ref{fig:sac}, we plot the ranges of $c$ under which points, parameterized by $R_1$ on the sum capacity boundary of the strong Z-IC without state, can be achieved by the state-dependent Z-IC following Theorem \ref{thr:ZSPointCapa}. It can be seen that as correlation between the two states (represented by $c$) increases, initially more points on the sum capacity boundary are achieved and then  less points are achieved as $c$ is above a certain threshold. Thus, higher correlation does not guarantee more capability of achieving the sum capacity boundary. This is because in our scheme both $U_i$ and $V$ are specially designed for $Y_1$ based on dirty paper coding. At receiver 2, such design of $V$ initially approximates better the dirty paper coding design for $Y_2$ as $c$ becomes large, but then becomes worse as $c$ continues to increase. Hence decoding of $V$ at receiver 2 initially gets better and then becomes less capable, which consequently determines variation of achievability of the sum capacity boundary. 

Fig. \ref{fig:sac} also plots the same parameter range for the state-dependent IC as characterized by Theorem \ref{thr:SPointCapa}. It is clear that the state-dependent IC achieves a smaller line segment on the sum-capacity (i.e., smaller range of $R_1$). This is reasonable, because Theorem \ref{thr:SPointCapa} for the IC requires more conditions than Theorem \ref{thr:ZSPointCapa} for the Z-IC. Fig. \ref{fig:sac} also demonstrates that large value of c(i.e., higher correlation between the states) is required for the IC to achieve the sum capacity than the Z-IC. This is because the dirty paper coding is designed with respect to receiver 1. High correlation between states helps such design to be more effective to cancel state at receiver 2 as well.

\section{Weak Interference Regime}\label{sec:weak}
		In this section, we study the state-dependent IC and Z-IC in the weak interference regime. The channel parameters for the IC in this regime satisfy  $|a(1+b^2P_1)|+|b(1+a^2P_2)|\leqslant1$, which reduces to $a\leqslant1$ for the Z-IC. It has been shown in \cite{Shang09,Anna09,Mota09}, for the weak IC without state and in \cite{Sason04} for the weak Z-IC that the sum capacity can be achieved by treating interference as noise. It was further shown in \cite{Duan16IT} that for the IC and Z-IC with the same but differently scaled state at two receivers, independent dirty paper coding at the two transmitters to cancel the states and treating interference as noise achieve the same sum capacity. We here observe that such a scheme is also achievable with the presence of two correlated states, which thus yields the following Corollary.
			
		\begin{corollary}(A direct result following \cite{Duan16IT})\label{thr:weak_IC_ZIC}
			For the state-dependent IC with states noncausally known at both transmitters, if $|a(1+b^2P_1)|+|b(1+a^2P_2)|\leqslant1$, then the sum capacity is given by
			\begin{flalign}\nn
			C_{sum} = \frac{1}{2}\log\left(1+\frac{P_1}{a^2P_2+1}\right)+ \frac{1}{2}\log\left(1+\frac{P_2}{b^2P_1+1}\right).
			\end{flalign}
			For the state-dependent Z-IC with states noncausally known at both transmitters, if $a^2\leqslant 1$, then the sum capacity is given by
			\begin{flalign}\nn
			C_{sum} = \frac{1}{2}\log\left(1+\frac{P_1}{a^2P_2+1}\right)+ \frac{1}{2}\log\left(1+P_2\right).
			\end{flalign}
		\end{corollary}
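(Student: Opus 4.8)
The plan is to prove both the IC and Z-IC statements by the standard two-sided argument---an achievability scheme matched by a converse---and to make explicit that the correlation between $S_1$ and $S_2$ enters neither direction, which is exactly what makes this a direct consequence of \cite{Duan16IT}.

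For achievability I would let each transmitter apply Costa's dirty paper coding \cite{Costa83} against only the state corrupting its own receiver, while each receiver treats the unintended codeword as additional Gaussian noise. Concretely, in the IC transmitter~1 performs DPC with respect to $S_1$ treating $aX_2+N_1$ (of variance $a^2P_2+1$) as the effective noise, and transmitter~2 performs DPC with respect to $S_2$ treating $bX_1+N_2$ (of variance $b^2P_1+1$) as the effective noise. Since Costa's result removes the known state completely irrespective of its statistics, these achieve $\frac{1}{2}\log(1+\frac{P_1}{a^2P_2+1})$ and $\frac{1}{2}\log(1+\frac{P_2}{b^2P_1+1})$ respectively; for the Z-IC, setting $b=0$ leaves receiver~2 interference-free so that transmitter~2 attains $\frac{1}{2}\log(1+P_2)$. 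The two DPC constructions are independent and each references only its own marginal state, so the joint law of $(S_1,S_2)$---in particular $\rho$---never appears in the rate expressions.

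For the converse I would introduce a genie revealing $S_1^n$ to receiver~1 and $S_2^n$ to receiver~2. Each receiver then subtracts its own state, and because the states become common knowledge at both ends of each link, the genie-aided channel is information-theoretically equivalent to the Gaussian weak IC (resp.\ Z-IC) without state. Side information at the decoders cannot shrink the capacity region, so the state-dependent sum capacity is upper bounded by the no-state sum capacity, which under the stated weak-interference conditions is known to equal the claimed expression by \cite{Shang09,Anna09,Mota09} for the IC and \cite{Sason04} for the Z-IC. This matches the achievable sum rate, establishing tightness.

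The only point that genuinely needs checking---and the closest thing to an obstacle---is that the two independent DPC schemes impose no joint rate penalty, i.e.\ that each receiver may treat the other link as clean Gaussian noise. This reduces to the fact that a DPC-generated input is independent of the state it codes against, and since the extra state observed at the unintended receiver is, through the Markov structure, independent of that input as well, receiver~1 sees $aX_2$ as Gaussian interference of variance $a^2P_2$ uncorrelated with its own signal and state. The weak-interference condition is precisely the regime in which treating this interference as noise is sum-rate optimal for the no-state channel, so no further constraint arises and the argument of \cite{Duan16IT} carries over essentially verbatim.
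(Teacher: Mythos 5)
Your proposal is correct and follows essentially the same route as the paper, which states this as a direct corollary of \cite{Duan16IT}: achievability by each transmitter independently dirty-paper coding against its own receiver's state while interference is treated as noise (a scheme whose rates never involve the state correlation), and a converse inherited from the noisy-interference sum-capacity results \cite{Shang09,Anna09,Mota09,Sason04} for the channel without state. You merely spell out what the paper leaves to the citation---the genie-aided reduction to the no-state channel and the independence of the DPC inputs from both states---and these details are consistent with the argument in \cite{Duan16IT}.
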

	
It can be seen that the sum capacity achieving scheme does not depend on the correlation of the states, and hence, in the weak interference regime, the sum capacity is not affected by the correlation of the states.

\section{Conclusion}\label{sec:Conclusion}
		
		In this paper, we studied the state-dependent Gaussian IC and Z-IC with receivers being corrupted by two {\em correlated} states which are noncausally known at transmitters. The correlated states can be reduced to two extreme cases: two independent states and one differently scaled state. We characterized the conditions on the channel parameters under which state-dependent IC and Z-IC achieve the capacity region or the sum capacity of the corresponding channel without state. Our result suggests that more correlated states tend to make it easier to fully cancel the states. Our comparison between the IC and the Z-IC suggests that the IC benefits more if the correlation between the states increases. We anticipates that the state cancellation schemes we develop here can be useful for studying other state-dependent models.

\vspace{1cm}

\appendix

\noindent {\Large \textbf{Appendix}}

\vspace{-3mm}
\section{Proof of Proposition \ref{pps:IC inner}}\label{apx:IC inner}
We use random codes and fix the following joint distribution:
$$P_{S_1S_2UVX_1X_2Y_1Y_2}=P_{S_1S_2}P_{U|S_1S_2}P_{X_1|US_1S_2}P_{V|S_1S_2}P_{X_2|VS_1S_2}P_{Y_1Y_2|S_1S_2X_1X_2}.$$

\begin{enumerate}
	\item Codebook Generation:
	\begin{itemize}
		\item Generate $2^{n(R_1+\tR_1)}$ codewords $U^n(w_1,l_1)$ with i.i.d.\ components based on $P_U$. Index these codewords by $w_1=1, \ldots, 2^{nR_1}, l_1 = 1, \ldots, 2^{n\tR_1}$.
		\item Generate $2^{n(R_2+\tR_2)}$ codewords $V^n(w_2,l_2)$ with i.i.d.\ components based on $P_V$. Index these codewords by $w_2=1, \ldots, 2^{nR_2}, l_2 = 1, \ldots, 2^{n\tR_2}$.
	\end{itemize}
	\item Encoding:
	\begin{itemize}	
		\item Transmitter 1: Given $(s^n_1,s^n_2)$ and $w_1$, choose a $u^n(w_1,\tl_1)$ such that $$(u^n(w_1,\tl_1),s_1^n,s_2^n) \in T^n_\epsilon(P_{S_1S_2U}).$$ Otherwise, set $\tl_1=1$. It can be shown that for large $n$, such $u^n$ exists with high probability if 
		\begin{equation}
		\tR_1>I(U;S_1S_2). \label{eq:pps1}
		\end{equation}
		Then generate $x^n_1$ with i.i.d. component based on $P_{X_1|US_1S_2}$ for transmission.

		\item Transmitter 2: Given $(s^n_1,s^n_2)$ and $w_2$, choose a $v^n(w_2,\tl_2)$ such that $$(v^n(w_2,\tl_2),s_1^n,s_2^n) \in T^n_\epsilon(P_{S_1S_2V}).$$
		Otherwise, set $\tl_2=1$. It can be shown that for large $n$, such $v^n$ exists with high probability if 	
		\begin{equation}
		\tR_2>I(V;S_1S_2).
		\end{equation}
		Then generate $x^n_2$ with i.i.d. components based on $P_{X_2|US_1S_2}$ for transmission.
	\end{itemize}

	\item Decoding:
	\begin{itemize}
		\item Decoder 1: Given $y^n_1$, find $(\hw_2, \hl_2)$ such that $$(v^n(w_2,\hl_2),y^n_1) \in T^n_\epsilon(P_{VY_1}).$$  If no or more than one such pair $(\hw_2, \hl_2)$ can be found, declare an error. It is easy to show that for sufficiently large $n$, we can correctly find such a pair with high probability if 
		\begin{equation}
		R_2+\tR_2\leqslant I(V;Y_1).
		\end{equation}
		After decoding $v^n$, find a unique pair $(\hw_1, \hl_1)$ such that $$(u^n(\hw_1, \hl_1),v^n(w_2,\hl_2),y^n_1) \in T^n_\epsilon(P_{VUY_1}).$$
		If no or more than one such pair $(\hw_2, \hl_2)$ can be found, declare an error.  It is easy to show that for sufficiently large $n$, we can correctly find such a pair with high probability if 
		\begin{equation}
		R_1+\tR_1\leqslant I(U;VY_1).
		\end{equation}
		
		\item Decoder 2: Given $y^n_2$, find $(\hw_1, \hl_1)$ such that $$(v^n(w_1,\hl_1),y^n_2) \in T^n_\epsilon(P_{UY_2}).$$  If no or more than one such pair $(\hw_1, \hl_1)$ can be found, declare an error. It is easy to show that for sufficiently large $n$, we can correctly find such a pair with high probability if 
		\begin{equation}
		R_1+\tR_1\leqslant I(U;Y_2).
		\end{equation}
		After decoding $u^n$, find a unique pair $(\hw_2, \hl_2)$ such that $$(v^n(\hw_2, \hl_2),u^n(w_1,\hl_1),y^n_2) \in T^n_\epsilon(P_{VUY_2}).$$
		If no or more than one such pair can be found, declare an error.  It is easy to show that for sufficiently large $n$, we can correctly find such a pair with high probability if 
		\begin{equation}
		R_2+\tR_2\leqslant I(V;UY_2).\label{eq:pps2}
		\end{equation}
	\end{itemize}
\end{enumerate}
Proposition \ref{pps:IC inner} is thus proved by combining \eqref{eq:pps1}-\eqref{eq:pps2}.

\section{Proof of Proposition \ref{pps:Z_inner}}\label{apx:Z inner}
We use random codes and fix the following joint distribution:
$$P_{S_1S_2UVX_1X_2Y_1Y_2}=P_{S_1S_2}P_{U|S_1S_2}P_{X_1|US_1S_2}P_{V|S_1S_2}P_{X_2|VS_1S_2}P_{Y_1Y_2|S_1S_2X_1X_2}.$$

\begin{enumerate}
	\item Codebook Generation:
	\begin{itemize}
		\item Generate $2^{n(R_1+\tR_1)}$ codewords $U^n(w_1,l_1)$ with i.i.d.\ components based on $P_U$. Index these codewords by $w_1=1, \ldots, 2^{nR_1}, l_1 = 1, \ldots, 2^{n\tR_1}$.
		\item Generate $2^{n(R_2+\tR_2)}$ codewords $V^n(w_2,l_2)$ with i.i.d.\ components based on $P_V$. Index these codewords by $w_2=1, \ldots, 2^{nR_2}, l_2 = 1, \ldots, 2^{n\tR_2}$.
	\end{itemize}
	\item Encoding:
	\begin{itemize}
		
		\item Transmitter 1: Given $(s^n_1,s^n_2)$ and $w_1$, choose a $u^n(w_1,\tl_1)$ such that $$(u^n(w_1,\tl_1),s_1^n,s_2^n) \in T^n_\epsilon(P_{S_1S_2U}).$$
		Otherwise, set $\tl_1=1$. It can be shown that for large $n$, such $u^n$ exists with high probability if 
		\begin{equation}
		\tR_1>I(U;S_1S_2). \label{eq:pps2-1}
		\end{equation}
		Then generate $x^n_1$ with i.i.d. component based on $P_{X_1|US_1S_2}$ for transmission.
		
		\item Transmitter 2: Given $(s^n_1,s^n_2)$ and $w_2$, choose a $v^n(w_2,\tl_2)$ such that $$(v^n(w_2,\tl_2),s_1^n,s_2^n) \in T^n_\epsilon(P_{S_1S_2V}).$$
		Otherwise, set $\tl_2=1$. It can be shown that for large $n$, such $v^n$ exists with high probability if 	
		\begin{equation}
		\tR_2>I(V;S_1S_2).
		\end{equation}
		Then generate $x^n_2$ with i.i.d. component based on $P_{X_2|VS_1S_2}$ for transmission.
	\end{itemize}
	
	\item Decoding:
	\begin{itemize}
		\item Decoder 1: Given $y^n_1$, find $(\hw_2, \hl_2)$ such that $$(v^n(\hw_2,\hl_2),y^n_1) \in T^n_\epsilon(P_{VY_1}).$$  If no or more than one such pair $(\hw_2, \hl_2)$ can be found, declare an error. It is easy to show that for sufficiently large $n$, we can correctly find such a pair with high probability if 
		\begin{equation}
		R_2+\tR_2\leqslant I(V;Y_1).
		\end{equation}
		After decoding $v^n$, find a unique pair $(\hw_1, \hl_1)$ such that $$(u^n(\hw_1, \hl_1),v^n(w_2,\hl_2),y^n_1) \in T^n_\epsilon(P_{VUY_1}).$$
		If no or more than one such pair can be found, declare an error.  It is easy to show that for sufficiently large $n$, we can correctly find such a pair with high probability if 
		\begin{equation}
		R_1+\tR_1\leqslant I(U;VY_1).
		\end{equation}
		
		\item Decoder 2: Given $y^n_2$, find $(\hw_2, \hl_2)$ such that $$(v^n(w_2,\hl_1),y^n_2) \in T^n_\epsilon(P_{UY_2}).$$ If no or more than one such pair $(\hw_2, \hl_2)$ can be found, declare an error. It is easy to show that for sufficiently large $n$, we can correctly find such a pair with high probability if 
		\begin{equation}
		R_2+\tR_2\leqslant I(V;Y_2).\label{eq:pps2-2}
		\end{equation}
	\end{itemize}
\end{enumerate}
Proposition \ref{pps:Z_inner} is thus proved by combining \eqref{eq:pps2-1}-\eqref{eq:pps2-2}.

\section{Proof of Proposition \ref{pps:strong_inner}}\label{apx:strong_inner}
We use random codes and fix the following joint distribution:
$$P_{S_1S_2U_1U_2VX_1X_2Y_1Y_2}=P_{S_1S_2}P_{V|S_1}P_{X_2|VS_1}P_{U_1|S_1}P_{U_2|S_1U_1}P_{X_1|U_1U_2S_1}P_{Y_1|S_1X_1X_2}P_{Y_2|S_2X_2}.$$

\begin{enumerate}
	\item Codebook Generation:
	\begin{itemize}
		\item Generate $2^{n(R_{11}+\tR_{11})}$ codewords $U_1^n(w_{11},l_{11})$ with i.i.d.\ components based on $P_{U_1}$. Index these codewords by $w_{11}=1, \ldots, 2^{nR_{11}}, l_{11} = 1, \ldots, 2^{n\tR_{11}}$.
		\item For each $u_1^n(w_{11},l_{11})$, generate $2^{n(R_{12}+\tR_{12})}$ codewords $U_2^n(w_{11},l_{11},w_{12},l_{12})$ with i.i.d.\ components based on $P_{U_2|U_1}$. Index these codewords by $w_{12}=1, \ldots, 2^{nR_{12}}, l_{12} = 1, \ldots, 2^{n\tR_{12}}$.
		\item Generate $2^{n(R_2+\tR_2)}$ codewords $V^n(w_2,l_2)$ with i.i.d.\ components based on $P_V$. Index these codewords by $w_2=1, \ldots, 2^{nR_2}, v = 1, \ldots, 2^{n\tR_2}$.
	\end{itemize}
	\item Encoding:
	\begin{itemize}
		\item Transmitter 1: Given $s^n_1$ and $w_{11}$, choose a $u_1^n(w_{11},\tl_{11})$ such that $$(u^n(w_{11},\tl_{11}),s_1^n) \in T^n_\epsilon(P_{S_1U_{11}}).$$
		Otherwise, set $\tl_{11}=1$. It can be shown that for large $n$, such $u_1^n$ exists with high probability if 
		\begin{equation}
		\tR_{11}>I(U_1;S_1). \label{eq:pps4-1}
		\end{equation}
		Given $w_{11}$, $\tl_{11}$, $w_{12}$ and $s_1^n$, choose a $u_2^n(w_{11},\tl_{11},w_{12},\tl_{12})$ such that 
		$$(u_1^n(w_{11},\tl_{11}),u_2^n(w_{11},\tl_{11},w_{12},\tl_{12}),s_1^n) \in T^n_\epsilon(P_{S_1U_1U_2}).$$
		Otherwise, set $\tl_{12}=1$. It can be shown that for large $n$, such $u_2^n$ exists with high probability if
		\begin{equation}
		\tR_{12}>I(U_2;S_1|U_1). \label{eq:pps4-2}
		\end{equation}
		Given $u_1^n(w_{11},\tl_{11})$, $u_2^n(w_{11},\tl_{11},w_{12},\tl_{12})$ and $s_1^n$, generate $x_1^n$ with i.i.d. components based on $P_{X_1|S_1U_1U_2}$.
		
		\item Transmitter 2: Given $s^n_1$ and $w_2$, choose a $v^n(w_2,\tl_2)$ such that $$(v^n(w_2,\tl_2),s_1^n) \in T^n_\epsilon(P_{S_1V}).$$
		Otherwise, set $\tl_2=1$. It can be shown that for large $n$, such $v^n$ exists with high probability if 	  
		\begin{equation}
		\tR_2>I(V;S_1).
		\end{equation}
		Then generate $x^n_2$ with i.i.d. component based on $P_{X_2|VS_1}$ for transmission.
	\end{itemize}
	
	\item Decoding:
	\begin{itemize}
		\item Decoder 1: Given $y^n_1$, find $(\hw_{11}, \hl_{11})$ such that $$(u_1^n(\hw_{11}, \hl_{11}),y^n_1) \in T^n_\epsilon(P_{U_1Y_1}).$$  If no or more than one such a pair $(\hw_{11}, \hl_{11})$ can be found, declare an error. It is easy to show that for sufficiently large $n$, we can correctly find such a pair with high probability if 
		\begin{equation}
		R_{11}+\tR_{11}\leqslant I(U_1;Y_1).
		\end{equation}
		After decoding $u_1^n$, find a unique pair $(\hw_{2}, \hl_{2})$ such that $$(u_1^n(\hw_{11}, \hl_{11}),v^n(w_2,\hl_2),y^n_1) \in T^n_\epsilon(P_{VU_1Y_1}).$$
		If no or more than one such pair can be found, declare an error.  It is easy to show that for sufficiently large $n$, we can correctly find such a pair with high probability if 
		\begin{equation}
		R_2+\tR_2\leqslant I(V;Y_1|U_1).
		\end{equation}
		
		After successively decoding $v^n$, find a unique tuple $(w_{11},\tl_{11},w_{12},\tl_{12})$ such that
		$$((u_1^n(\hw_{11}, \hl_{11}),v^n(w_2,\hl_2),u_2^n(w_{11},\tl_{11},w_{12},\tl_{12}),y^n_1) \in T^n_\epsilon(P_{VU_1U_2Y_1})).$$
		If no or more than one such pair can be found, declare an error.  It is easy to show that for sufficiently large $n$, we can correctly find such a pair with high probability if 
		\begin{equation}
		R_{12}+\tR_{12}\leqslant I(U_2;VY_1|U_1).
		\end{equation}

		\item Decoder 2:  Given $y^n_2$, find $(\hw_{11}, \hl_{11})$ such that $$(u_1^n(\hw_{11}, \hl_{11}),y^n_2) \in T^n_\epsilon(P_{U_1Y_1}).$$  If no or more than one such pair $(\hw_{11}, \hl_{11})$ can be found, declare an error. It is easy to show that for sufficiently large $n$, we can correctly find such a pair with high probability if 
		\begin{equation}
		R_{11}+\tR_{11}\leqslant I(U_1;Y_2).
		\end{equation}
		After decoding $u_1^n$, find a unique pair $(\hw_{2}, \hl_{2})$ such that $$(u_1^n(\hw_{11}, \hl_{11}),v^n(w_2,\hl_2),y^n_2) \in T^n_\epsilon(P_{VU_1Y_2}).$$
		If no or more than one such pair can be found, declare an error.  It is easy to show that for sufficiently large $n$, we can correctly find such a pair with high probability if 
		\begin{equation}
		R_2+\tR_2\leqslant I(V;Y_2|U_1).
		\end{equation}
		
		After successively decoding $v^n$, find a unique tuple $(w_{11},\tl_{11},w_{12},\tl_{12})$ such that
		$$((u_1^n(\hw_{11}, \hl_{11}),v^n(w_2,\hl_2),u_2^n(w_{11},\tl_{11},w_{12},\tl_{12}),y^n_1) \in T^n_\epsilon(P_{VU_1U_2Y_2})).$$
		If no or more than one such pair can be found, declare an error.  It is easy to show that for sufficiently large $n$, we can correctly find such a pair with high probability if 
		\begin{equation}
		R_{12}+\tR_{12}\leqslant I(U_2;VY_2|U_1).
		\end{equation}
	\end{itemize}
	The corresponding achievable region is thus characterized by
	\begin{flalign}
		R_{11}&\leqslant \min\{I(U_1;Y_1), I(U_1;Y_2)\}-I(U_1;S_1)\label{eq:pps4-3}\\
		R_{12}&\leqslant \min\{I(U_2;VY_1|U_1), I(U_2;VY_2|U_1)\}-I(U_2;S_1|U_1)\label{eq:pps4-4}\\
		R_{2}&\leqslant \min\{I(V;Y_1|U_1), I(V;Y_2|U_1)\}-I(V;S_1).\label{eq:pps4-5}
	\end{flalign}
\end{enumerate}
Proposition \ref{pps:strong_inner} is completed by combining \eqref{eq:pps4-3}-\eqref{eq:pps4-5} and $R_1=R_{11}+R_{12}$.

\bibliographystyle{unsrt}

\end{document}